\definecolor{darkgreen}{rgb}{0.0, 0.5, 0.0}
\definecolor{darkyellow}{rgb}{0.99, 0.73, 0.01}
\pgfplotsset{compat=1.18}
\newcommand{\ftH}{\mathbf{H}}
\newcommand{\sigw}{\Tilde{w}}
\newcommand{\sigv}{\Tilde{v}}
\newcommand{\sigx}{\Tilde{x}}
\newcommand{\sigr}{\Tilde{r}}
\title{Generalized Quantum Signal Processing and Non-Linear Fourier Transform are equivalent}
\author{Lorenzo Laneve}
\affil{\small \emph{Faculty of Informatics --- Università della Svizzera Italiana, 6900 Lugano, Switzerland}}
\date{}
\begin{document}

\maketitle

\begin{abstract}
    Quantum signal processing (QSP) and quantum singular value transformation (QSVT) are powerful techniques for the development of quantum procedures. They allow to derive circuits preparing desired polynomial transformations. Recent research [Alexis \emph{et al.}~2024] showed that \emph{Non-Linear Fourier Analysis} (NLFA) can be employed to numerically compute a QSP protocol, with provable stability. In this work we extend their result, showing that GQSP and the Non-Linear Fourier Transform over $SU(2)$ are the same object. This statement --- proven by a simple argument --- has a bunch of consequences: first, the Riemann-Hilbert-Weiss algorithm can be turned, with little modifications and no penalty in complexity, into a unified, provably stable algorithm for the computation of phase factors in \emph{any} QSP variant, including GQSP. Secondly, we derive a uniqueness result for the existence of GQSP phase factors based on the bijectivity of the Non-Linear Fourier Transform. Furthermore, NLFA provides a complete theory of \emph{infinite generalized quantum signal processing}, which characterizes the class of functions approximable by GQSP protocols.
\end{abstract}

\section{Introduction}
Quantum signal processing (QSP) is a technique that draw significant attention in the past years~\cite{lowHamiltonianSimulationUniform2017,lowHamiltonianSimulationQubitization2019}. In this setting, one is able to construct polynomials of a certain variable $z$ (the \emph{signal}) encoded in a \emph{signal operator} $W(z)$ by intertwining calls to said operator with single-qubit operations $A_k$, called \emph{processing operators}. By instantiating $W(z)$ as some big matrix $M$ acting on a larger subsystem, and invoking standard matrix decomposition theorems, one can transpose the role of $z$ to the eigenvalues or the singular values of $M$. These linear-algebraic techniques go by the name of \emph{quantum eigenvalue transformation} (QET)~\cite{lowOptimalHamiltonianSimulation2017,dongGroundStatePreparationEnergy2022} and \emph{quantum singular-value transformation} (QSVT)~\cite{gilyenQuantumSingularValue2019,tangCSGuideQuantum2023}.

The rich class of polynomials provided by QSP, along with the fact that the ansatz is generally carried out on a single qubit, makes QET and QSVT flexible and generally applicable tools in quantum algorithmic design: they were shown to unify --- and sometimes even improve --- many known quantum algorithms, including Hamiltonian simulation~\cite{lowHamiltonianSimulationUniform2017,lowHamiltonianSimulationQubitization2019,lowOptimalHamiltonianSimulation2017,martynEfficientFullycoherentQuantum2023}, amplitude amplification~\cite{groverFastQuantumMechanical1996,brassardQuantumAmplitudeAmplification2002}, linear system solving~\cite{harrowQuantumAlgorithmLinear2009}, factoring~\cite{gilyenQuantumSingularValue2019,martynGrandUnificationQuantum2021} and many others~\cite{mcardleQuantumStatePreparation2022,laneveRobustBlackboxQuantumstate2023,rallAmplitudeEstimationQuantum2023,sinanan-singhSingleshotQuantumSignal2023,rossiQuantumSignalProcessing2023}.

More recently, the traditional QSP theory was enlarged with what is called \emph{generalized QSP} (or GQSP)~\cite{motlaghGeneralizedQuantumSignal2024}: roughly speaking, while traditional QSP only allowed for determined subsets of $SU(2)$ as valid processing operators, GQSP allows any single-qubit operation, which corresponds to an enlarged set of achievable polynomials. This not only gave further results and improvements~\cite{berryDoublingEfficiencyHamiltonian2024a,sunderhaufGeneralizedQuantumSingular2023}, but gives a more complete treatment of single-qubit quantum signal processing, as it shows that the full class of polynomials admits a GQSP decomposition.

Usually, QSP applications consist in three steps: first one has to design a polynomial $P(z)$ approximating some desired target function. The second step is \emph{completion}, where one computes a polynomial $Q(z)$ satisfying $|P|^2 + |Q|^2 = 1$, a condition necessary for $P$ to be embedded in a quantum state. The last step is to compute numerically the processing operators $A_k$ constituting the protocol for $(P, Q)$. While the original QSP theorems are proven using an induction argument, the corresponding procedure --- the so-called \emph{layer stripping} method ---, does not guarantee numerical stability, and cannot provide reliable results for polynomials of too large degrees. An important line of research is devoted to this task, and many results were found using several strategies, such as root finding~\cite{haahProductDecompositionPeriodic2019,chaoFindingAnglesQuantum2020}, Prony's method~\cite{yingStableFactorizationPhase2022}, and iterative methods~\cite{dongEfficientPhasefactorEvaluation2021,wangEnergyLandscapeSymmetric2022,dongRobustIterativeMethod2024,dongInfiniteQuantumSignal2024}. Iterative methods are generally faster and more stable than algebraic methods, but they seem to fail in the \emph{fully-coherent regime}, i.e., when $\norm{P}_{\infty} \simeq 1$. Moreover, all these methods are developed for the traditional QSP, while numerical algorithms for GQSP are limited and not provably stable~\cite{yamamotoRobustAngleFinding2024}.

More recently, QSP was shown to have connections with the \emph{Non-Linear Fourier Transform} (or NLFT)~\cite{alexisQuantumSignalProcessing2024,taoNonlinearFourierAnalysis2012}. In particular it was shown that the inverse NLFT can be used to stably compute a QSP protocol for a desired target function, using the \emph{Riemann-Hilbert-Weiss algorithm}~\cite{alexisInfiniteQuantumSignal2024}. Moreover, NLFT provided an answer to what was called \emph{infinite quantum signal processing}~\cite{dongInfiniteQuantumSignal2024}, where the QSP protocols have infinite length, producing non-polynomial target functions.

In this work we show that the NLFT not only is useful to compute the processing operators for QSP protocols, but that GQSP and NLFT are indeed the same theory. More precisely, while traditional QSP protocols (where the $A_k$ are $X$-rotations) correspond to imaginary sequences in the pre-image of the NLFT, protocols with $Y$-rotations correspond to real sequences, and generally complex sequences yield GQSP protocols. This simple statement has many consequences: first, this gives a complete theory of \emph{infinite generalized quantum signal processing}, which can be further investigated by directly working on the NLFT. Moreover this confirms that, given a target function, computing the QSP processing operators is computationally equivalent to the computation of the inverse NLFT, regardless of the variant of QSP taken into consideration, i.e., also the latter problem is (partially) solved by the aforementioned algorithms, which can then be employed to invert the NLFT in other contexts. The third is that the Riemann-Hilbert-Weiss algorithm (presented in~\cite{alexisInfiniteQuantumSignal2024} and later improved in~\cite{niFastPhaseFactor2024}) can be turned --- with little modification and without any penalty in complexity --- into a unified algorithm for QSP factorization in any variant and, in particular, gives the first stable algorithm for GQSP, even in the fully-coherent regime.

This work starts with Section~\ref{sec:qsp-theory}, which gives a comprehensive treatment of the QSP literature, showing how the QSP zoo --- including QSVT and QET-U --- can all be reduced to a convenient version of GQSP. Section~\ref{sec:nlft-theory} will then give a brief and informal introduction to the NLFT over $SU(2)$, along with a bunch of important properties. Section~\ref{sec:gqsp-nlft-equivalence} shows the relationship between (G)QSP and NLFT, with an overview of which subsets of the NLFT domain are related to the QSP subalgebras, an extended relationship between the QSP phase factors and the inverse NLFT, as well as a uniqueness result for GQSP phase factors. Section~\ref{sec:rhw-algorithm} is devoted to review the Riemann-Hilbert-Weiss algorithm, as well as the Half Cholesky method of~\cite{niFastPhaseFactor2024}, showing how to extend them for arbitrary complex polynomials. A Python implementation of the unified Riemann-Hilbert-Weiss algorithm can be found in~\cite{rhwPythonRepo}.

\subsection{Preliminaries and notation}
For a complex matrix $A$, we use $A^*$ to denote entrywise conjugation, and $A^\dag$ for the conjugate transpose. Given a function of complex variable $f(z)$, its conjugate function is defined as $f^*(z) := \overline{f(1/\overline{z})}$, where $\overline{z}$ denotes the complex conjugate of the number $z$. We use $\D$ to denote the unit disk $\{ z \in \C : |z| \le 1 \}$, while $\T$ is used to denote the unit circle $\{ z \in \C : |z| = 1 \}$. Similarly, $\D^* := \{ 1/\overline{z} : z \in \D \}$ is the (closed) region of the complex plane outside of the unit circle. The Hardy spaces $H^2(\D)$ are the spaces of holomorphic functions on $\D$, whose norm
\begin{align*}
    \norm{f}_{H^2(\D)}^2 := \sup_{0 \le r < 1} \int_{\T} |f(rz)|^2
\end{align*}
is bounded. Roughly speaking, $f \in H^2(\D)$ can be represented by a formal power series with non-negative frequencies. Similarly, one defines the Hardy space $H^2(\D^*)$ for the holomorphic functions on $\D^*$, whose formal power series have only non-positive frequencies. We define the Pauli matrices
\begin{align*}
    X =
    \begin{bmatrix}
        0 & 1 \\
        1 & 0
    \end{bmatrix}, \ \ \ \ 
    Y =
    \begin{bmatrix}
        0 & -i \\
        i & 0
    \end{bmatrix}, \ \ \ \ 
    Z =
    \begin{bmatrix}
        1 & 0 \\
        0 & -1
    \end{bmatrix},
\end{align*}
while $H$ is the single-qubit Hadamard gate satisfying $HZH = X, HXH = Z$.
\section{General theory of single-qubit quantum signal processing}
\label{sec:qsp-theory}
We give an introduction to the theory of quantum signal processing (QSP). In particular we reduce all the variants of QSP known across the literature to a slightly modified version of Generalized QSP~\cite{motlaghGeneralizedQuantumSignal2024}, which we will then relate to the NLFT. In this setting, we have access to a single-qubit operator parameterized by a variable $z \in \T$ (the \emph{signal} operator). For example, we can take the following operator
\begin{align*}
    \sigw = \diag(z, 1) \ .
\end{align*}
We then choose a sequence $A_k \in SU(2)$ of single-qubit operations independent of $z$ (the \emph{processing} operators), and we construct a protocol as follows
\begin{align}
    A_0 \sigw A_1 \sigw \cdots \sigw A_n \ . \label{eq:analytic-qsp-ansatz}
\end{align}
If we start, say, in the $\ket{0}$ state, then by matrix multiplication one can see that the final state will be of the form
\begin{align*}
    P(z) \ket{0} + Q(z) \ket{1}
\end{align*}
where $P(z), Q(z)$ are polynomials in $z$ of degree $n$, satisfying $|P(z)|^2 + |Q(z)|^2 \equiv 1$ on $\T$. It is then natural to ask whether \emph{any} such polynomial pair $(P, Q)$ admits a decomposition of the form of~(\ref{eq:analytic-qsp-ansatz}), and the answer was proven to be positive~\cite{motlaghGeneralizedQuantumSignal2024}.

A second signal operator found in the literature~\cite{haahProductDecompositionPeriodic2019} is the following
\begin{align*}
    \sigv = \diag(z, z^{-1}) \ .
\end{align*}
Because of the presence of $z^{-1}$, the polynomials produced by a protocol using $\sigv$ will be Laurent polynomials in general, and since $z$ and $z^{-1}$ have two degrees of distance between each other, such Laurent polynomials will have definite parity, i.e., will be either even or odd in $z$. That said, similarly to the first case, any pair of normalized, definite-parity Laurent polynomials can be decomposed if one replaces $\sigw$ with $\sigv$ in (\ref{eq:analytic-qsp-ansatz}).

From now on we will call the first variant \emph{analytic} QSP, while we use \emph{Laurent} QSP to denote the second form. While Laurent QSP is the first one considered in the literature, analytic QSP is easier to visualize and has a much cleaner relationship with the NLFT, thus we will mainly focus on it, and the rest of this section is devoted to showing that every single-qubit QSP variant encountered in the literature can be reduced to the analytic case. We first show that analytic and Laurent QSP are perfectly equivalent, in the following sense:
\begin{lemma}[\cite{laneveMultivariatePolynomialsAchievable2025}]
    \label{thm:analytic-laurent-correspondence}
    If the pair $(P, Q)$ is constructed by a sequence of processing operators $A_0, \ldots, A_n$ using analytic QSP, then the pair of Laurent polynomials
    \begin{align*}
        (P', Q') := (z^{-n} P(z^2), z^{-n} Q(z^2))
    \end{align*}
    is constructed by Laurent QSP using the same processing operators.
\end{lemma}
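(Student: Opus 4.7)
The plan is to reduce everything to the single algebraic identity
\[
\sigv(z) = z^{-1}\,\sigw(z^2),
\]
which one verifies instantly from the definitions $\sigv=\diag(z,z^{-1})$ and $\sigw=\diag(z,1)$. Since the scalar factor $z^{-1}$ commutes with every $SU(2)$ matrix, this identity will let me move all the scalars to the front of a Laurent QSP product and recover the analytic product at the rescaled argument $z^2$.

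More concretely, I would start from the Laurent QSP protocol with the given processing operators,
\[
U_L(z) \;=\; A_0\,\sigv(z)\,A_1\,\sigv(z)\,\cdots\,\sigv(z)\,A_n,
\]
substitute $\sigv(z)=z^{-1}\sigw(z^2)$ in each of the $n$ occurrences of the signal, and pull out the $n$ scalar factors $z^{-1}$. This yields
\[
U_L(z) \;=\; z^{-n}\,A_0\,\sigw(z^2)\,A_1\,\sigw(z^2)\,\cdots\,\sigw(z^2)\,A_n \;=\; z^{-n}\,U_A(z^2),
\]
where $U_A$ is the analytic QSP protocol built from the same $A_0,\ldots,A_n$.

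Finally, by hypothesis $U_A(w)\ket{0}=P(w)\ket{0}+Q(w)\ket{1}$, so applying $U_L(z)$ to $\ket{0}$ gives
\[
U_L(z)\ket{0} \;=\; z^{-n}P(z^2)\ket{0} + z^{-n}Q(z^2)\ket{1} \;=\; P'(z)\ket{0}+Q'(z)\ket{1},
\]
which is exactly the claim. As a sanity check one can verify the parity/degree bookkeeping: $P(z^2)$ and $Q(z^2)$ are polynomials in $z^2$ of degree at most $2n$, so after multiplication by $z^{-n}$ the entries $P', Q'$ are Laurent polynomials supported on frequencies of the same parity as $n$, which matches the definite-parity structure that Laurent QSP is known to produce.

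There is essentially no obstacle: the only thing to be careful about is that the scalar $z^{-1}$ commutes with $A_k\in SU(2)$ (which is trivial) and that the rewriting $\sigv(z)=z^{-1}\sigw(z^2)$ is applied uniformly to all $n$ signal slots, not to the outer $A_0$ and $A_n$. The whole argument is really just "factor out the scalar and reparametrize."
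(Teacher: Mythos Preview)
Your proof is correct and is essentially the same as the paper's: both hinge on the identity $\sigv(z)=z^{-1}\sigw(z^2)$ (the paper writes it as $\sigv=z^{-1}\sigw'$ with $\sigw'=\diag(z^2,1)$) and then collect the $n$ scalar factors $z^{-1}$. Your write-up just adds the explicit parity/degree sanity check, which the paper omits.
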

\begin{proof}
    If $A_0 \sigw \cdots \sigw A_n \ket{0} = (P, Q)$, then by replacing $\sigw$ with $\sigw' = \diag(z^2, 1)$, we construct $(P(z^2), Q(z^2))$. The claim holds by simply noticing that $\sigv = z^{-1} \sigw'$, and collecting the $z^{-1}$.
\end{proof}
Hence, if one wants to construct some definite-parity Laurent polynomial $(P, Q)$ using Laurent QSP, this can always be converted to an analytic version $(z^{n/2} P(z^{1/2}), z^{n/2} Q(z^{1/2}))$, essentially squeezing the support from $\{ -n, \ldots, n \}$ to $\{ 0, \ldots, n \}$, and then deriving the protocol in the analytic picture. The corresponding processing operators will be valid also for the original polynomials, by replacing back $\sigw$ with $\sigv$.

Usually, either because constructions require it or for ease of implementations, the $A_k$'s are not allowed to be general $SU(2)$ unitaries. The usual choice is to consider only operators of the form
\begin{align*}
    A_k = e^{i \phi_k X}
\end{align*}
where the $\phi_k$ are usually called \emph{phase factors}. In both Laurent and analytic pictures, constraining the processing operators to be $X$-rotations gives the additional condition that $P$ must have real coefficients, and $Q$ must have purely imaginary coefficients. In this work we will refer to the case where the $A_k$'s are constrained to be $X/Y/Z$-rotations as $X/Y/Z$-constrained QSP, respectively. The following theorem sums up what said so far:
\begin{theorem}[Generalized QSP, adapted from~\cite{motlaghGeneralizedQuantumSignal2024}]
    \label{thm:generalized-qsp}
    Let $P, Q \in \C[z]$ of degree $n$. There exist phase factors $\lambda, \phi_k, \theta_k \in [-\frac{\pi}{2}, \frac{\pi}{2})$ such that:
    \begin{align*}
        e^{i\lambda Z} e^{i\phi_0 X} e^{i\theta_0 Z} \ \sigw\ e^{i\phi_1 X} e^{i\theta_1 Z} \ \sigw \cdots \sigw \ e^{i\phi_n X} e^{i\theta_n Z} \ket{0} = (P, Q)
    \end{align*}
    if and only if $|P(z)|^2 + |Q(z)|^2 \equiv 1$ on $\T$.
\end{theorem}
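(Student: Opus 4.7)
The plan is to prove the two implications separately. The forward (``only if'') direction is immediate: for $z \in \T$ the signal operator $\sigw = \diag(z,1)$ is unitary and each $e^{i\phi_k X} e^{i\theta_k Z}$ lies in $SU(2)$, so the full product is unitary on $\T$, and applied to $\ket{0}$ it produces a unit vector. The nontrivial direction I would argue by induction on $n$ via the classical \emph{layer-stripping} strategy, lifted to the matrix picture. Given $(P, Q)$ with $|P|^2 + |Q|^2 \equiv 1$ on $\T$, I would first complete the pair to a $2\times2$ polynomial matrix
\[
V(z) := \begin{pmatrix} P(z) & -\tilde Q(z) \\ Q(z) & \tilde P(z) \end{pmatrix}, \qquad \tilde P(z) := z^n\, \overline{P(1/\bar z)},
\]
and verify by a direct calculation that $V(z)$ is unitary on $\T$ and satisfies $\det V(z) = z^n$; the goal is then to factor $V(z) = e^{i\lambda Z} A_0 \prod_{k=1}^{n}(\sigw A_k)$.

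The base case $n = 0$ reduces to writing every unit vector $(p_0,q_0) \in \C^2$ as $e^{i\lambda Z} e^{i\phi_0 X} e^{i\theta_0 Z}\ket{0}$, which is the Euler decomposition of $SU(2)$ and can be realized with angles in $[-\pi/2, \pi/2)$. For the inductive step I would peel off the last block by choosing $A_n = e^{i\phi_n X} e^{i\theta_n Z}$ so that $W(z) := V(z) A_n^{-1} \sigw^{-1}$ is again polynomial, and then apply the induction hypothesis to the first column of $W$. The condition that $W(z)$ be polynomial reduces to the vanishing of the constant term of the first column of $V(z) A_n^{-1}$, which yields two linear equations in the entries $\alpha, \beta$ of $A_n^{-1}$; their compatibility is precisely the identity $p_0 \overline{p_n} + q_0 \overline{q_n} = 0$, i.e.\ the $z^n$-coefficient of $|P|^2 + |Q|^2 \equiv 1$. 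Once $W(z)$ is polynomial, being unitary on $\T$ with $\det W = z^{n-1}$ forces its second column to have degree $\leq n-1$ automatically, so $W$ has the shape required to continue the induction.

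I expect the main obstacle to be the careful bookkeeping on the angular ranges $[-\pi/2, \pi/2)$: one must verify that valid $(\phi_n, \theta_n)$ can always be chosen inside this half-open window---including the degenerate cases where one of $p_n, q_n, p_0, q_0$ vanishes, or where $\phi_n$ saturates the endpoint $-\pi/2$---and that the leading factor $e^{i\lambda Z}$ supplies exactly the extra phase freedom needed for $e^{i\lambda Z} e^{i\phi_0 X} e^{i\theta_0 Z}$ to cover every element of $SU(2)$ in the base case.
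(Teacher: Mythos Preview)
The paper does not supply its own proof of Theorem~\ref{thm:generalized-qsp}: it is quoted as an adaptation of the Motlagh--Wiebe result and left unproven in the text. Your layer-stripping argument is essentially the original Motlagh--Wiebe approach, and the sketch is sound. The compatibility condition $p_0\overline{p_n}+q_0\overline{q_n}=0$ indeed guarantees a nontrivial $(\alpha,\beta)$; the remaining $U(1)$ freedom in that null-vector is exactly what lets you force $A_n^{-1}$ into the two-parameter family $e^{-i\theta_n Z}e^{-i\phi_n X}$, and the same relations (conjugated) kill the $z^n$-coefficient of the second column, so $W$ really does drop to degree $n-1$. The angular-range issues you flag are genuine bookkeeping but not obstructions: at each peel the phase freedom lets you land $\phi_n,\theta_n$ in $[-\tfrac{\pi}{2},\tfrac{\pi}{2})$, and in the base case the three half-period parameters $(\lambda,\phi_0,\theta_0)\in[-\tfrac{\pi}{2},\tfrac{\pi}{2})^3$ do cover $SU(2)$ via the $ZXZ$ Euler decomposition once one remembers that shifting any angle by $\pi$ only flips a global sign.

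Where the paper \emph{does} differ is that its machinery furnishes an alternative, non-inductive route to the same statement: one takes $(a,b):=(z^{-n}P,Q)$ (after rotating away the phase of the leading coefficient of $P$), invokes Tsai's bijectivity of the NLFT on compactly supported sequences to obtain $F_0,\ldots,F_n$, and then reads off the GQSP phase factors through the dictionary of Theorems~\ref{thm:nlft-qsp-analytic} and~\ref{thm:nlft-to-gqsp}. The Discussion section makes this explicit. So your proposal and the paper's implicit argument are genuinely different in character: yours is a direct, self-contained induction that stays inside polynomial algebra; the paper's route outsources existence to the NLFT bijection, but in exchange obtains uniqueness of the canonical phase factors (Corollary~\ref{thm:gqsp-uniqueness}) and an immediate extension to infinite protocols and $H^2$ targets, neither of which the layer-stripping argument yields on its own.
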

\noindent This is slightly different from the original GQSP theorem of~\cite{motlaghGeneralizedQuantumSignal2024}, since we are using $SU(2)$ rotations, while they use asymmetric rotations and a reflection. This version --- equivalent up to global phases ---, however, is a direct generalization of the traditional QSP ans{\"a}tze.

We remark that GQSP can be equivalently expressed in the Laurent picture, with a clear characterization given by Lemma~\ref{thm:analytic-laurent-correspondence}. In any of the two pictures, one can have an infinite sequence of $A_k$'s. This, intuitively, produces Taylor or Laurent series, showing that in the limit as $n \rightarrow \infty$, one can produce (roughly) any $H^2$ function. This goes by the name of \emph{infinite quantum signal processing}, which was studied under certain symmetry constraints~\cite{dongInfiniteQuantumSignal2024}, and then later completed under the same constraints, using Non-Linear Fourier Analysis~\cite{alexisInfiniteQuantumSignal2024}.

\begin{figure}
    \centering
    \begin{tikzpicture}[
    node distance=2.5cm and 2cm,
    arrow/.style={draw, -{Latex[length=2mm]}, thick},
    label/.style={font=\small, midway, align=center, text=black, inner sep=1pt}
]

\tikzset{
    diagnode/.style={draw, rounded corners, minimum width=3.5cm,inner ysep=0.3cm, minimum height=1cm, align=center},
}

\node[diagnode] (AnalyticQSP) at (0,0) {\small {\bfseries Analytic QSP} \\ {$\sigw = \diag(z, 1)$} \\ \cite{motlaghGeneralizedQuantumSignal2024}};
\node[diagnode] (LaurentQSP) at (0,-4) {\small {\bfseries Laurent QSP} \\ {$\sigv = \diag(z, z^{-1})$} \\ \cite{haahProductDecompositionPeriodic2019}};
\node[diagnode] (ChebyshevQSP) at (6,-4) {\small {\bfseries Chebyshev QSP} \\ $\sigx = e^{i \arccos(x) X}$ \\ \cite{lowMethodologyResonantEquiangular2016,lowOptimalHamiltonianSimulation2017}};
\node[diagnode] (ReflectionQSP) at (12,-4) {\small {\bfseries Reflection QSP} \\ $\sigr = \id - 2 \Pi_x$ \\ \cite{gilyenQuantumSingularValue2019}};
\node[diagnode] (QSVT) at (12, 0) {{\bfseries QSVT} \\ \cite{gilyenQuantumSingularValue2019}};
\node[diagnode] (QETU) at (6,0) {{\bfseries QET-U} \\ \cite{dongGroundStatePreparationEnergy2022}};

\draw[arrow,<->] (AnalyticQSP) -- (LaurentQSP) node[label,fill=white,inner sep=2pt,midway] {Lemma~\ref{thm:analytic-laurent-correspondence}};
\draw[arrow,<->] (ChebyshevQSP) -- (ReflectionQSP);

\draw[arrow] (LaurentQSP) -- (ChebyshevQSP);
\draw[arrow] (ReflectionQSP) -- (QSVT) node[label,fill=white,inner sep=2pt,midway] {CS decomposition \\ \cite{tangCSGuideQuantum2023}};
\draw[arrow] (AnalyticQSP) -- (QETU) node[label,fill=white,inner sep=2pt,midway] {Spectral\\Theorem};
\draw[arrow] (LaurentQSP) -- (QETU)
node[label,fill=white,inner sep=2pt,midway] {Spectral\\Theorem};

\end{tikzpicture}
    \caption{Overview of the QSP zoo, summarizing Section~\ref{sec:qsp-theory}. This shows that analytic QSP --- which is equal to the Non-Linear Fourier transform (Section~\ref{sec:gqsp-nlft-equivalence}) --- is complete for every possible QSP variant and thus we can restrict ourselves to this case.}
    \label{fig:qsp-variants}
\end{figure}

\subsection{Chebyshev and reflection QSP}
The reader more familiar with QSVT than QSP might find the theory above slightly different. Originally, QSP was formalized as a protocol that constructs a polynomial $P(x)$, for a real variable $x \in [-1, 1]$ given by the signal operator~\cite{lowMethodologyResonantEquiangular2016,lowOptimalHamiltonianSimulation2017}
\begin{align*}
    \sigx = \begin{bmatrix}
        x & i \sqrt{1 - x^2} \\
        i \sqrt{1 - x^2} & x
    \end{bmatrix}
\end{align*}
and constraining the signal operators to be $Z$-rotations $A_k = e^{i \phi_k Z}$. In order to obtain Chebyshev QSP from Laurent QSP (argument also shown in~\cite{haahProductDecompositionPeriodic2019, martynGrandUnificationQuantum2021}), we simply do a change of variable $x = (z + z^{-1})/2$ in a $X$-constrained Laurent QSP.
\begin{align*}
    e^{i\phi_0 X} \sigv\ e^{i\phi_1 X} \sigv \cdots \sigv\ e^{i\phi_n X} =
    \begin{bmatrix}
        P(z) & Q(z) \\
        -Q^*(z) & P^*(z)
    \end{bmatrix} \ .
\end{align*}
Since $\sigx = H \sigv H$, the above protocol is equal to the following
\begin{align*}
    H e^{i\phi_0 Z} \sigx e^{i\phi_1 Z} \sigx \cdots \sigx e^{i\phi_n Z} H = 
    H
    \begin{bmatrix}
        P'(z) & Q'(z) \\
        -Q'^*(z) & P'^*(z)
    \end{bmatrix}
    H \ .
\end{align*}
By the Hadamard transformation, we obtain:
\begin{align*}
    P' = \frac{P + P^*}{2} + \frac{Q - Q^*}{2}, \,\,\, Q' = \frac{P - P^*}{2} - \frac{Q + Q^*}{2} \ .
\end{align*}
Since we have an $X$-constrained QSP protocol, then $P \in \R[z, z^{-1}], Q \in i\R[z, z^{-1}]$, which implies that $P'(z^{-1}) = P'(z), Q'(z) = -Q'(z^{-1})$. By exchanging the variables we obtain:
\begin{align*}
    P''(x) = \sum_{k=0}^n 2 p'_k T_k(x),\,\,\, iQ''(x) \sqrt{1 - x^2} = \sum_{k=0}^n  2 i q'_k U_{k-1}(x) \sqrt{1 - x^2}
\end{align*}
where $T_k, U_k$ are the Chebyshev polynomials of first and second kind, respectively, which satisfy
\begin{align*}
    T_k(x) = \frac{z^k + z^{-k}}{2} = \cos(k\theta), \,\,\, U_{k-1}(x) \sqrt{1 - x^2} = \frac{z^k - z^{-k}}{2i} = \sin(k\theta)
\end{align*}
with $z = e^{i\theta}$ and $x = \cos \theta$. To summarize, Chebyshev QSP is of the form
\begin{align*}
    e^{i\phi_0 Z} \sigx e^{i\phi_1 Z} \sigx \cdots \sigx e^{i\phi_n Z} = 
    \begin{bmatrix}
        P''(x) & iQ''(x) \sqrt{1 - x^2} \\
        -iQ''^*(x) \sqrt{1 - x^2} & P''^*(x)
    \end{bmatrix}
\end{align*}
where definite-parity $P'', Q'' \in \C[x]$ can be constructed~\cite[Theorem 9]{martynGrandUnificationQuantum2021}. The reflection QSP is similar to the Chebyshev variant, except that the signal operator is
\begin{align*}
    \sigr =
    \begin{bmatrix}
        x & \sqrt{1 - x^2} \\
        \sqrt{1 - x^2} & -x
    \end{bmatrix} \ .
\end{align*}
The reflection variant is the heart of the quantum singular value transformation~\cite[Corollary 8]{gilyenQuantumSingularValue2019}, and leverages the fact that the signal operator is Hermitian (the QSVT theorem can be proven by showing that $\sigr$ appears in the Cosine-Sine decomposition of the block-encoding unitary~\cite{tangCSGuideQuantum2023}). The reduction to Chebyshev QSP is straightforward by replacing $$\sigr = -i e^{-i\frac{\pi}{4}Z} \sigx e^{-i\frac{\pi}{4}Z} \ .$$
We sum up all the relationships between the QSP variants in Figure~\ref{fig:qsp-variants}.
\begin{figure}
    \centering
    \definecolor{colorf}{HTML}{fff06e}
\definecolor{colora}{HTML}{83defc}
\definecolor{colorb}{HTML}{ff9494}

\def\fpos{0}
\def\apos{-2}
\def\bpos{-3.5}

\begin{tikzpicture}
    \foreach \x in {-6,-5,...,6} {
        \ifnum\x<-2
            \draw[thick] (\x,\fpos) rectangle ++(1,1);
        \else\ifnum\x>3
            \draw[thick] (\x,\fpos) rectangle ++(1,1);
        \else
            \draw[thick, fill=colorf] (\x,\fpos) rectangle ++(1,1);
        \fi\fi
        \node at (\x+0.5,{\fpos+0.5}) {$F_{\x}$};
    }
    \node at (-6.5,0.5) {$\cdots$};
    \node at (7.5,0.5) {$\cdots$};

    \node at (0.5, -0.5) {$\Downarrow$};

    \foreach \x in {-6,-5,...,6} {
        \ifnum\x<-5
            \draw[thick] (\x,\apos) rectangle ++(1,1);
        \else\ifnum\x>0
            \draw[thick] (\x,\apos) rectangle ++(1,1);
        \else
            \draw[thick, fill=colora] (\x,\apos) rectangle ++(1,1);
        \fi\fi
        \node at (\x+0.5,{\apos+0.5}) {$a_{\x}$};
    }
    \node at (-6.5,{\apos+0.5}) {$\cdots$};
    \node at (7.5,{\apos+0.5}) {$\cdots$};

    \foreach \x in {-6,-5,...,6} {
        \ifnum\x<-2
            \draw[thick] (\x,\bpos) rectangle ++(1,1);
        \else\ifnum\x>3
            \draw[thick] (\x,\bpos) rectangle ++(1,1);
        \else
            \draw[thick, fill=colorb] (\x,\bpos) rectangle ++(1,1);
        \fi\fi
        \node at (\x+0.5,{\bpos+0.5}) {$b_{\x}$};
    }
    \node at (-6.5,{\bpos+0.5}) {$\cdots$};
    \node at (7.5,{\bpos+0.5}) {$\cdots$};
\end{tikzpicture}
    \caption{Visualization of a sequence $F_k$ supported on $\{ -2, \ldots, 3 \}$ mapped by the NLFT to a pair of Laurent polynomials $(a, b)$ (the colored squares are the only non-zero terms). The degrees of $b$ follow the support of $F$, while $a$ always ends at the zero frequency (with $a_0 > 0$).}
    \label{fig:nlft-visualization}
\end{figure}

\section{The Non-Linear Fourier transform over $SU(2)$}
\label{sec:nlft-theory}

The Non-Linear Fourier transform is a mathematical tool with its own history and line of research~\cite{taoNonlinearFourierAnalysis2012,tsaiSU2NonlinearFourier2005}. We define it briefly and state some useful properties, inviting the interested reader to check~\cite{alexisQuantumSignalProcessing2024} for a more comprehensive overview.
\begin{definition}
    Let $(F_n)_{n \in \Z}$ be a complex sequence in $\ell^2(\Z)$. The \emph{Non-Linear Fourier Transform} over $SU(2)$ is defined informally as the infinite product
    \begin{align*}
        \calG_{F}(z) = 
        \prod_{n = -\infty}^{+\infty}
        \frac{1}{\sqrt{1 + |F_n|^2}}
        \begin{bmatrix}
            1 & F_n z^n \\
            -F_n^* z^{-n} & 1
        \end{bmatrix}
    \end{align*}
\end{definition}
Notice that $\calG_F$ produces a function from $\T$ to $SU(2)$. We use $(a, b)$ as a shortcut to denote the unique $SU(2)$ element whose first row is $(a, b)$, so that $\calG_F(z) = (a(z), b(z))$. We introduce some important definitions: a function $f \in H^2(\D)$ is called \emph{inner} if $|f| = 1$ almost everywhere on $\T$. Two functions $a, b$ are said to share a (non-trivial) common inner factor $f$ if $a = f a', b = f b'$ for some $a', b'$.
\begin{definition}[\cite{tsaiSU2NonlinearFourier2005, alexisQuantumSignalProcessing2024}]
    Let $\ftH_{\ge k}$ be the space of measurable functions $(a, b)$ such that:
    \begin{itemize}
        \item $a a^* + b b^* = 1$ almost everywhere on $\T$;
        \item $a \in H^2(\D^*)$ with $a(\infty) > 0$, and $b \in z^k H^2(\D)$;
        \item $a^*, b$ share no common inner factor.
    \end{itemize}
    Moreover, we denote with $\ftH = \bigcup_k \ftH_{\ge k}$.
\end{definition}
\noindent The second condition implies the Laurent series of $a$ has degrees in $(-\infty, 0]$ (with zero-frequency coefficient being positive), while the series of $b$ has degrees in $[k, +\infty)$.
\begin{theorem}[\cite{tsaiSU2NonlinearFourier2005,alexisQuantumSignalProcessing2024}]
    The NLFT is a bijection between the space $\ell^2(\Z_{\ge k})$ of one-sided square-summable sequences starting from $k$ and $\ftH_{\ge k}$.
\end{theorem}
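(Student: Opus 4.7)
The plan is to prove this bijection via the layer-stripping technique standard in the NLFT literature, handling finite sequences first by induction and then extending to arbitrary $\ell^2$ sequences through a limiting argument.

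For finitely-supported sequences I would proceed by induction on the support size. The base case is $F \equiv 0$, giving $(a,b) = (1,0) \in \ftH_{\geq k}$ trivially. For the inductive step, let $(a', b') \in \ftH_{\geq k+1}$ be the image of a sequence supported on $\{k+1, k+2, \ldots\}$; attaching a term $F_k$ at index $k$ amounts to multiplying on the left by the $n = k$ matrix factor, yielding
\begin{align*}
    a = \frac{a' - F_k z^k (b')^*}{\sqrt{1+|F_k|^2}}, \qquad b = \frac{b' + F_k z^k (a')^*}{\sqrt{1+|F_k|^2}}.
\end{align*}
Since $b' \in z^{k+1} H^2(\D)$, the correction $z^k (b')^*$ has only strictly negative frequencies, so $a$ remains in $H^2(\D^*)$ with $a(\infty) = a'(\infty)/\sqrt{1+|F_k|^2} > 0$. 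The $z^k$-coefficient of $b$ is $b_k = F_k a'(\infty)/\sqrt{1+|F_k|^2}$, yielding the explicit inversion formula $F_k = b_k/a(\infty)$. Iterated, this both proves injectivity and gives a constructive algorithm recovering the full sequence from $(a,b)$.

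For the infinite case, the central identity comes from tracking the zero-frequency coefficient of $a$ across all layers:
\begin{align*}
    a(\infty) = \prod_{n \geq k} \frac{1}{\sqrt{1+|F_n|^2}}, \qquad \log\frac{1}{a(\infty)^2} = \sum_{n \geq k} \log(1+|F_n|^2).
\end{align*}
Together with $\log(1+x) \sim x$ for small $x$, this establishes the equivalence between $F \in \ell^2$ and $a(\infty) > 0$, and ensures that truncated products converge in $L^2(\T)$ as the truncation length tends to infinity; passing to the limit places the infinite $\calG_F$ inside $\ftH_{\geq k}$ by closedness of the latter under such convergence.

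The main obstacle is surjectivity in the infinite case: given an arbitrary $(a, b) \in \ftH_{\geq k}$, one iterates the extraction formula on the successively stripped pairs $(a^{(j)}, b^{(j)})$ to produce candidate coefficients $F_j = b^{(j)}_j / a^{(j)}(\infty)$, and then must show that $(F_j)_{j \geq k}$ is in $\ell^2$ and that its NLFT reconstructs $(a,b)$. The hypothesis that $a^*$ and $b$ share no common inner factor is essential here: a common inner factor would be invisible to the stripping recursion and would either produce spurious additional pre-images (breaking uniqueness) or cause stagnation with $a^{(j)}(\infty) \to 0$ while the $F_j$'s fail to be square-summable. Establishing the $\ell^2$-bound on the extracted sequence is the technical heart of the theorem and requires the classical Tsai conservation law relating $\|F\|_{\ell^2}^2$ to the $H^2$-energy of $(a,b)$, which I would invoke from~\cite{tsaiSU2NonlinearFourier2005, alexisQuantumSignalProcessing2024} rather than reprove from scratch.
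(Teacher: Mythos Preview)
The paper does not prove this theorem at all: it is stated with a citation to \cite{tsaiSU2NonlinearFourier2005,alexisQuantumSignalProcessing2024} and used as a black box, with no accompanying argument. So there is nothing in the paper to compare your proposal against.

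Your sketch is a faithful outline of the standard layer-stripping proof from those references, and the algebraic manipulations in the finite case are correct. The one point worth flagging is that your treatment of surjectivity in the infinite case is openly incomplete: you say you would invoke the Tsai conservation law from the very references the theorem is cited to, which is precisely the non-trivial analytic content of the result. That is an honest acknowledgement rather than a gap per se, but it means your proposal is a roadmap rather than a proof. Also, one small imprecision: the role of the no-common-inner-factor hypothesis is not quite what you describe. It does not prevent the layer stripping from ``stagnating'' or the extracted sequence from failing $\ell^2$; rather, it is needed so that the NLFT of the extracted sequence actually equals the original $(a,b)$ and not some other pair with the same ratio $b/a$. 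Stripping only sees $b/a$, so without the inner-factor condition you would lose injectivity of the inverse map, not well-definedness of the forward stripping.
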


\noindent For completeness we also report some properties of the NLFT.
\begin{theorem}[\cite{alexisQuantumSignalProcessing2024}]
    \label{thm:nlft-properties}
    Suppose we have two sequences $F, G$, and $\calG_F = (a, b)$.
    \begin{enumerate}[(i)]
        \item \textbf{Shift property:} if $G_n = F_{n-k}$, then $\calG_G = (a, z^k b)$;

        \item \textbf{Composition property:} if the support of $F$ is entirely on the left of the support of $G$, then $\calG_{F+G} = \calG_F \cdot \calG_G$;

        \item \textbf{Phase property:} if $|c| = 1$, then $\calG_{cF} = (a, cb)$;

        \item \textbf{Reflection property:} if $G_n = F_{-n}$, then $\calG_G = (a^*(z^{-1}), b(z^{-1}))$;

        \item \textbf{Conjugation property:} if $G_n = F_n^*$, then $\calG_G = (a^*(z^{-1}), b^*(z^{-1}))$.
    \end{enumerate}
\end{theorem}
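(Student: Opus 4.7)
The plan is to verify each of the five properties by direct manipulation of the defining infinite product, writing $\calG_F(z) = \prod_n M_n(F)(z)$ with $M_n(F)(z) = \frac{1}{\sqrt{1+|F_n|^2}} \begin{bmatrix} 1 & F_n z^n \\ -F_n^* z^{-n} & 1 \end{bmatrix}$. The first four properties are essentially formal identities; the reflection property is the only one that requires an actual inductive argument.

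For the shift property (i), I would observe that $M_{n+k}(F)(z) = D_k \cdot M_n(F)(z) \cdot D_k^{-1}$ where $D_k = \diag(z^{k/2}, z^{-k/2})$; in the infinite product the adjacent factors $D_k^{-1} D_k$ telescope, leaving only $D_k \calG_F(z) D_k^{-1}$, whose top row is $(a, z^k b)$. The composition property (ii) is immediate from the definition: when the supports of $F$ and $G$ are separated, the formal product splits exactly as stated. The phase property (iii) is again a telescoping conjugation, this time by $\diag(c^{1/2}, c^{-1/2})$ for $|c|=1$. The conjugation property (v) follows from the identity $M_n(F^*)(z) = \overline{M_n(F)(z^{-1})}$ valid on $\T$ (where $\bar z = z^{-1}$): entrywise conjugation distributes over matrix products, so $\calG_{F^*}(z) = \overline{\calG_F(z^{-1})}$, whose first row coincides with $(a^*(z^{-1}), b^*(z^{-1}))$ on $\T$ and hence everywhere by analytic continuation.

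The reflection property (iv) is the genuine obstacle because the substitution $m=-n$ \emph{reverses} the order of the matrix product, and no single conjugation realises a reversal. My plan here is to induct on $|\mathrm{supp}(F)|$, using property (ii) as the inductive glue. For the base case $F = c\delta_k$, a one-line computation confirms that $\calG_{c\delta_{-k}}(z)$ matches $(a^*(z^{-1}), b(z^{-1}))$. For the inductive step, split $F = F_1 + F_2$ with $\mathrm{supp}(F_1)$ strictly to the left of $\mathrm{supp}(F_2)$; the reflected supports are then in the opposite order, so (ii) gives $\calG_G = \calG_{G_2} \cdot \calG_{G_1}$ where $G_i(n) = F_i(-n)$. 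Applying the induction hypothesis to each factor and expanding the $SU(2)$ product---using $(fg)^* = f^* g^*$ and $(f^*)^* = f$---produces exactly $(a^*(z^{-1}), b(z^{-1}))$ in the top row. Finally I would extend from finite-support to arbitrary $\ell^2$ sequences by invoking the continuity of the NLFT established in~\cite{alexisQuantumSignalProcessing2024}.
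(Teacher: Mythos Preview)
The paper does not supply its own proof of this theorem; it is quoted from the cited reference without argument, so there is no in-paper proof to compare your proposal against.

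That said, your argument is essentially correct. One notational slip: in (i) the identity you want is $M_{n+k}(G)(z)=D_k\, M_n(F)(z)\, D_k^{-1}$, not $M_{n+k}(F)$ --- as written the identity is false, but the intended telescoping is right (and you could take $D_k=\diag(z^k,1)$ to avoid the branch issue with $z^{k/2}$). For (iv), your induction on the support size works, but there is a shorter route that avoids both the induction and the final appeal to continuity: each single factor satisfies $X\,M_n(F)(z)\,X = M_n(F)(z)^{T}$ with $X$ the Pauli matrix, so transposing the full product reverses the order of the factors up to an outer conjugation by $X$. Combined with the pointwise identity $M_{-n}(G)(z)=M_n(F)(z^{-1})$, this gives directly
\[
\calG_G(z)=X\,\calG_F(z^{-1})^{T}\,X,
\]
whose first row is exactly $(a^*(z^{-1}),\,b(z^{-1}))$.
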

\noindent Another important property is that the support of $b$ follows the one of $F$: more precisely, if $F_n$ is supported on $\{n_1, \ldots, n_2\}$, then $b(z)$ will be a polynomial of the form $$b(z) = \sum_{k = n_1}^{n_2} b_k z^k \ .$$
$a$ will thus be a polynomial with the same support, but shifted so that its last non-zero coefficient is positive and on the zero frequency (see Figure~\ref{fig:nlft-visualization}).
\section{Equivalence of GQSP and NLFT}
\label{sec:gqsp-nlft-equivalence}

\begin{table}
    \centering
    \renewcommand{\arraystretch}{1.5}
    \begin{tabular}{c|c|c|c}
        \toprule
        \textbf{Sequence} & \textbf{NLFT/QSP $(P, Q)$} & \textbf{QSP Operators} & \textbf{Phase factors} \\
        \midrule
        \begin{tabular}{c}
            $F_k \in i\R$ \\ \cite{alexisInfiniteQuantumSignal2024} \\
        \end{tabular} \vspace*{0.1em} &
        $P \in \R[z], Q \in i\R[z]$ &
        \begin{tabular}{c}
            $A_k = e^{i\phi_k X}$ \\
            $\phi_k \in (-\frac{\pi}{2}, \frac{\pi}{2})$
        \end{tabular} &
        $F_k = i \tan(\phi_k)$ \\
        \hline
        $F_k \in \R$ &
        $P \in \R[z], Q \in \R[z]$ &
        \begin{tabular}{c}
            $A_k = e^{i\phi_k Y}$ \\
            $\phi_k \in (-\frac{\pi}{2}, \frac{\pi}{2})$
        \end{tabular} & $F_k = \tan(\phi_k)$ \\
        \hline
        \begin{tabular}{c} $F_k \in \C$ \end{tabular} &
        $P \in \C[z], Q \in \C[z]$ &
        \begin{tabular}{c}
            $A_0 = e^{i\lambda Z} e^{i\phi_0 X} e^{i\theta_0 Z}$ \\
            $A_k = e^{i\phi_k X} e^{i\theta_k Z}$
        \end{tabular} &
        \begin{tabular}{c}
            $F_k = i \tan(\phi_k) e^{2i \psi_k}$ \\
            $\psi_k = \lambda + \sum_{j=0}^{k-1} \theta_j$
        \end{tabular} \\
        \bottomrule
    \end{tabular}
    \caption{Overview of the relationship between the NLFT and the variants of QSP, summarizing Section~\ref{sec:gqsp-nlft-equivalence}. These hold for both Laurent and analytic QSP. The processing operators for the general complex case are the $SU(2)$ formulation of the generalized QSP of~\cite{motlaghGeneralizedQuantumSignal2024}, as stated in Theorem~\ref{thm:generalized-qsp}.}
    \label{tab:qsp-nlft-relation}
\end{table}

As we will see in Section~\ref{sec:rhw-algorithm}, the Riemann-Hilbert-Weiss algorithm of~\cite{alexisInfiniteQuantumSignal2024}, as well as the version augmented with the Half Cholesky method of~\cite{niFastPhaseFactor2024}, allows to compute the inverse NLFT, even when $b$ is not achievable with $X$-constrained symmetric QSP.

The bottleneck here is Lemma 2 in~\cite{alexisQuantumSignalProcessing2024}, which shows that phase factors for $X$-constrained symmetric QSP can be computed the phase factors through the relation:
\begin{align}
    F_k = i \tan(\phi_k) \Longleftrightarrow \phi_k = \arctan(-i F_k) \label{eq:phase-factor-from-nlft-1}
\end{align}
where $F_k$ is guaranteed to be purely imaginary, by the symmetry constraints on $(a, b)$. Here we extend this result --- with a simpler, elementary argument --- showing that, while a purely imaginary $F_k$ yields the phase factors for an $X$-constrained QSP protocol, real sequences can similarly be translated to $Y$-constrained QSP protocols, and generally complex sequences are converted to Generalized QSP protocols (in the sense of Theorem~\ref{thm:generalized-qsp}). Table~\ref{tab:qsp-nlft-relation} summarizes the relationship between QSP and NLFT discussed in this section.

\begin{lemma}
    \label{thm:nlft-qsp-analytic}
    Let $F = (F_k)_{k \in \Z}$ be a Non-Linear Fourier sequence supported on $\{ 0, \ldots, n \}$, and assume the NLFT maps this sequence to a pair $(a, b) \in \ftH$. Then, by constructing the processing operators
    \begin{align*}
        A_k := \frac{1}{\sqrt{1 + |F_k|^2}}
        \begin{bmatrix}
            1 & F_k \\
            -F_k^* & 1
        \end{bmatrix}
    \end{align*}
    The analytic QSP protocol gives
    \begin{align*}
        A_0 \sigw A_1 \sigw \cdots \sigw A_n = \calG_F(z) \cdot \sigw^{n} =
        \begin{bmatrix}
            z^n a(z) & b(z) \\
            - z^n b^*(z) & a^*(z)
        \end{bmatrix}
    \end{align*}
\end{lemma}
\begin{proof}
    Notice that the term in the NLFT can be decomposed as
    \begin{align*}
        \frac{1}{\sqrt{1 + |F_k|^2}}
        \begin{bmatrix}
            1 & F_k z^k \\
            -F_k^* z^{-k} & 1
        \end{bmatrix}
        =
        \frac{1}{\sqrt{1 + |F_k|^2}}
        \begin{bmatrix}
            z^k & 0 \\ 
            0 & 1
        \end{bmatrix}
        \begin{bmatrix}
            1 & F_k \\
            -F_k^* & 1
        \end{bmatrix}
        \begin{bmatrix}
            z^{-k} & 0 \\ 
            0 & 1
        \end{bmatrix}
    \end{align*}
    By this decomposition, we can rewrite the NLFT as follows
    \begin{align*}
        \calG_F(z) = \prod_{k = 0}^n \sigw^k A_k \sigw^{-k}
    \end{align*}
    By shifting the occurrences of $\sigw$ in contiguous terms, we can see that between each $A_k$ we have $\sigw^{-k} \sigw^{k+1} = \sigw$. The product thus simplifies to
    \begin{align*}
        \calG_F(z) = \sigw^0 A_0 \sigw A_1 \sigw A_2 \sigw \cdots \sigw A_n \sigw^{-n}
    \end{align*}
    proving the claim.
\end{proof}
\noindent By the same argument, we can indeed prove that, if $F_n$ is supported on $\{ n_1, \ldots, n_2 \}$, then we obtain
\begin{align*}
    A_{n_1} \sigw A_{n_1+1} \sigw \cdots \sigw A_{n_2} = \sigw^{-n_1} \calG_F(z) \sigw^{n_2} = \begin{bmatrix}
        z^{n_2 - n_1} a(z) & z^{-n_1} b(z) \\
        - z^{n_2} b^*(z) & a^*(z)
    \end{bmatrix} \ .
\end{align*}
For the reader more familiar with the Laurent picture, we give the analogous claim, which directly follows from Lemma~\ref{thm:analytic-laurent-correspondence}.
\begin{corollary}
    \label{thm:nlft-qsp-laurent}
    Let $F = (F_k)_{k \in \Z}$, $(a(z), b(z))$ and $A_k$ as in Lemma~\ref{thm:nlft-qsp-analytic}. The Laurent QSP protocol yields
    \begin{align*}
        A_0 \sigv A_1 \sigv \cdots \sigv A_n = \calG_F(z^2) \cdot \sigv^{n} =
        \begin{bmatrix}
            z^n a(z^2) & z^{-n} b(z^2) \\
            - z^n b^*(z^2) & z^{-n} a^*(z^2)
        \end{bmatrix} \ .
    \end{align*}
\end{corollary}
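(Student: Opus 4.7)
The plan is to reduce everything to Theorem~\ref{thm:nlft-qsp-analytic} via the change-of-variables trick from the proof of Lemma~\ref{thm:analytic-laurent-correspondence}. Concretely, I would first invoke Theorem~\ref{thm:nlft-qsp-analytic} to obtain
\[
A_0 \sigw A_1 \sigw \cdots \sigw A_n \;=\; \calG_F(z)\,\sigw^n,
\]
and then make the substitution $z \mapsto z^2$ on both sides. Since the $A_k$ do not depend on $z$, the left-hand side becomes $A_0 \sigw' A_1 \sigw' \cdots \sigw' A_n$ with $\sigw' = \diag(z^2, 1)$, and the right-hand side becomes $\calG_F(z^2)(\sigw')^n$.

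Next I would use the key identity $\sigw' = z\,\sigv$, which is exactly the relation exploited in Lemma~\ref{thm:analytic-laurent-correspondence}. Because the scalar $z$ commutes with every matrix, each of the $n$ occurrences of $\sigw'$ on the left contributes one factor of $z$, and similarly $(\sigw')^n = z^n \sigv^n$ on the right. Collecting these factors gives
\[
z^n \bigl(A_0 \sigv A_1 \sigv \cdots \sigv A_n\bigr) \;=\; z^n\, \calG_F(z^2)\, \sigv^n,
\]
from which cancelling $z^n$ yields the first equality of the corollary. The explicit matrix form then follows by writing $\calG_F(z^2) = \begin{bmatrix} a(z^2) & b(z^2) \\ -b^*(z^2) & a^*(z^2) \end{bmatrix}$ and multiplying by $\sigv^n = \diag(z^n, z^{-n})$ on the right.

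There is no real obstacle here: the only thing to get right is the bookkeeping of powers of $z$, in particular the fact that the $n$ signal operators $\sigw'$ collectively produce a factor $z^n$ that exactly matches the one coming from $(\sigw')^n$ on the right, so that they cancel cleanly and leave a statement purely in terms of $\sigv$.
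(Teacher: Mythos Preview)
Your proposal is correct and follows exactly the route the paper intends: the paper simply states that the corollary ``directly follows from Lemma~\ref{thm:analytic-laurent-correspondence}'', and your argument is precisely the spelling-out of that lemma's substitution $\sigw' = z\,\sigv$ applied to the full matrix identity of Theorem~\ref{thm:nlft-qsp-analytic}. The bookkeeping of the $n$ scalar factors of $z$ and the final matrix multiplication by $\sigv^n = \diag(z^n, z^{-n})$ are all handled correctly.
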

We nonetheless hope to convey the message that working with the Laurent picture is quite confusionary in this context --- since Lemma~\ref{thm:nlft-qsp-analytic} is more immediate to understand ---, and it is always better to reduce to analytic QSP in order to work with the NLFT, while keeping in mind that returning to the Laurent formulation is free and always possible, thanks to Lemma~\ref{thm:analytic-laurent-correspondence}.

\subsection{Retrieving the GQSP phase factors from the NLFT sequence}
Suppose to have a protocol in the form of Lemma~\ref{thm:nlft-qsp-analytic}, obtained by computing the inverse NLFT. Such $A_k$ admits an Euler decomposition~\cite{nielsenQuantumComputationQuantum2010a}:
\begin{align}
    A_k = e^{i \psi_k Z} e^{i \phi_k X} e^{-i \psi_k Z} =
    \begin{bmatrix}
        \cos \phi_k & i e^{2i \psi_k} \sin \phi_k \\
        i e^{-2i \psi_k} \sin \phi_k & \cos \phi_k
    \end{bmatrix}
    \stackrel{!}{=}
    \frac{1}{\sqrt{1 + |F_k|^2}}
    \begin{bmatrix}
        1 & F_k \\
        -F_k^* & 1
    \end{bmatrix}
    \label{eq:processing-operator-euler-decomp}
\end{align}
which yields the relation $F_k = i \tan(\phi_k) e^{2i\psi_k}$. A solution $\psi_k \in [-\frac{\pi}{4}, \frac{\pi}{4}), \phi_k \in (-\frac{\pi}{2}, \frac{\pi}{2})$ is the following:
\begin{align*}
    \psi_k & = 
    \begin{cases}
        \frac{1}{2} \arctan\qty(\frac{\Im(-i F_k)}{\Re(-i F_k)}) = - \frac{1}{2} \arctan\qty(\frac{\Re(F_k)}{\Im(F_k)}) & F_k \not\in \R \\
        -\frac{\pi}{4} & F_k \in \R \setminus \{ 0 \} \\
        0 & F_k = 0
    \end{cases} \\
    \phi_k & = \arctan(-i e^{-2i \psi_k} F_k)
\end{align*}
We call $\psi_k$ \emph{phase prefactors}. This is not the only possible solution, as the sign of $F_k$ is controlled by both $\phi_k$ and $\psi_k$. Moreover, we have to cover the case $F_k \in \R$ separately: we choose $-\pi/4$ when $F_k$ is a non-zero real number, so that we obtain the processing operator
$$e^{-i\frac{\pi}{4}Z} e^{i\phi_k X} e^{i\frac{\pi}{4}Z} = e^{i\phi_k Y}$$
with $\phi_k = \arctan(F_k)$. When $F_k = 0$, we may choose any $\psi_k$ we want, but we need $\psi_k = 0$ in what follows. With these choices, the given solution is the only one that generalizes~(\ref{eq:phase-factor-from-nlft-1}). What we obtain is a protocol that applies three rotations at each step. To recover the GQSP protocol of Theorem~\ref{thm:generalized-qsp} we simply note that $Z$ rotations commute with $\sigw$ and $\sigv$.
\begin{align*}
    e^{-i \psi_k Z} \sigw\ e^{i \psi_{k+1} Z} = e^{i (\psi_{k+1} - \psi_k) Z} \sigw =: e^{i \theta_k Z} \sigw \ .
\end{align*}
This concludes the relation, which we summarize in
\begin{theorem}
    \label{thm:nlft-to-gqsp}
    Let $(F_k)_{k \in \Z}$ be a sequence supported in $\{ 0, \ldots, n \}$ whose NLFT is $(a(z), b(z))$. A GQSP protocol achieving $(P, Q) := (z^n a, b)$ is
    \begin{align*}
        e^{i\lambda Z} e^{i\phi_0 X} e^{i\theta_0 Z} \ \sigw\ e^{i\phi_1 X} e^{i\theta_1 Z} \ \sigw \cdots \sigw \ e^{i\phi_n X} e^{i\theta_n Z}
    \end{align*}
    where the generalized phase factors can be computed as
    \begin{align}
        \begin{cases}
            \lambda = \psi_0 \\
            \theta_k = \psi_{k+1} - \psi_k \\
            \phi_k = \arctan(-i e^{-2i \psi_k} F_k)
        \end{cases}
        \label{eq:gqsp-phase-factor-computation}
    \end{align}
    and $\psi_k$ being
    \begin{align*}
        \psi_k & = 
        \begin{cases}
            - \frac{1}{2} \arctan\qty(\frac{\Re(F_k)}{\Im(F_k)}) & F_k \not\in \R \\
            -\frac{\pi}{4} & F_k \in \R \setminus \{ 0 \} \\
            0 & F_k = 0
        \end{cases}
    \end{align*}
\end{theorem}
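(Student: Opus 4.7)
The plan is to build on Theorem~\ref{thm:nlft-qsp-analytic}, which already exhibits the protocol as $A_0 \sigw A_1 \sigw \cdots \sigw A_n$ with explicit $A_k$'s. I would Euler-decompose each $A_k$ and then commute the resulting $Z$-rotations through the signal operators $\sigw$ to reassemble the expression into the GQSP form of Theorem~\ref{thm:generalized-qsp}.

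Step one is to solve the Euler decomposition $A_k = e^{i\psi_k Z} e^{i\phi_k X} e^{-i\psi_k Z}$. Expanding and matching entries against the $A_k$ given in Theorem~\ref{thm:nlft-qsp-analytic} yields the single complex constraint
\begin{align*}
    i e^{2i\psi_k} \tan\phi_k = F_k \ .
\end{align*}
Moduli give $|\tan\phi_k| = |F_k|$, and the argument fixes $e^{2i\psi_k}$ as a unit phase. Since $e^{2i\psi_k}$ has $\pi$-period in $\psi_k$, restricting $\psi_k \in [-\pi/4, \pi/4)$ selects the sign of $\tan\phi_k$ and hence of $\phi_k \in (-\pi/2, \pi/2)$. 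For $F_k \notin \R$ this produces $\psi_k = -\tfrac12 \arctan(\Re F_k/\Im F_k)$, after which $\phi_k = \arctan(-i e^{-2i\psi_k} F_k)$ follows from the diagonal entries. The boundary $F_k \in \R \setminus \{0\}$ makes $-iF_k$ purely imaginary and the arctangent undefined; I would set $\psi_k = -\pi/4$, which turns $e^{i\psi_k Z} e^{i\phi_k X} e^{-i\psi_k Z}$ into $e^{i\phi_k Y}$ and recovers the $Y$-constrained row of Table~\ref{tab:qsp-nlft-relation}. The case $F_k = 0$ yields $A_k = I$ and any $\psi_k$ works, but $\psi_k = 0$ is the choice needed for the telescoping in the next step.

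Step two is to substitute these decompositions into the product. Because $Z$ and $\sigw = \diag(z, 1)$ are both diagonal, $e^{i\alpha Z}$ commutes with $\sigw$, so for each $k = 0, \ldots, n-1$,
\begin{align*}
    e^{-i\psi_k Z} \, \sigw \, e^{i\psi_{k+1} Z} = e^{i(\psi_{k+1} - \psi_k) Z} \sigw = e^{i\theta_k Z} \sigw \ .
\end{align*}
Setting $\lambda := \psi_0$ absorbs the leftmost $e^{i\psi_0 Z}$, and adopting the convention $\psi_{n+1} := 0$ turns the dangling $e^{-i\psi_n Z}$ on the right into a final $e^{i\theta_n Z}$. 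Collecting terms reproduces the ansatz in the statement, and Theorem~\ref{thm:nlft-qsp-analytic} supplies the polynomial pair produced by the protocol.

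The only mildly subtle piece is the Euler-angle extraction at the branch set $F_k \in \R$, where the two-argument arctangent is discontinuous. This is a finite case analysis rather than a genuine obstacle: once the range conventions $\psi_k \in [-\pi/4, \pi/4)$ and $\phi_k \in (-\pi/2, \pi/2)$ are fixed, the stated formulas are forced, and the remainder of the argument is pure commutation of diagonal unitaries.
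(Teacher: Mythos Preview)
Your proposal is correct and follows essentially the same route as the paper: invoke Theorem~\ref{thm:nlft-qsp-analytic}, Euler-decompose each $A_k$ as $e^{i\psi_k Z} e^{i\phi_k X} e^{-i\psi_k Z}$ to obtain the constraint $F_k = i e^{2i\psi_k}\tan\phi_k$, resolve the branch cases for $\psi_k$, and then commute the diagonal $Z$-rotations through $\sigw$ to telescope into the GQSP ansatz. Your convention $\psi_{n+1} := 0$ is exactly what the paper achieves implicitly by extending $\psi_k = 0$ whenever $F_k = 0$ (in particular outside the support).
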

\noindent This result has a bunch of notable properties: first of all, the phase factors can be computed independently like in the case of~\cite{alexisQuantumSignalProcessing2024}, since $\lambda$ depends only on $F_0$, $\phi_k$ only on $F_k$, and $\theta_k$ only on $F_k, F_{k+1}$. This is why we chose $\psi_k = 0$ when $F_k = 0$, otherwise we would have non-zero phase factors even outside of the support of $F$, which would nevertheless resolve to the identity.

If $F$ is purely imaginary, then $\psi_k = 0$ for all $k$, and we reduce to~(\ref{eq:phase-factor-from-nlft-1}), implying that the GQSP protocol automatically turns into an $X$-constrained QSP. On the other hand, if $F$ is real, then the $A_k$ will turn out to be $Y$-rotations, and the GQSP protocol reduces to $Y$-constrained QSP. As a consequence, a NLFT algorithm can compute arbitrary GQSP phase factors, but whenever $(a, b)$ happens to have a certain symmetry property, then Lemma~\ref{thm:nlft-to-gqsp} will reduce the required rotation operations of the QSP protocol.

\subsection{Retrieving the NLFT sequence from a GQSP protocol}
For completeness, we also briefly discuss how to obtain a NLFT sequence from a GQSP protocol. Considering a protocol of the form as in Theorem~\ref{thm:generalized-qsp}:
\begin{align*}
    e^{i\lambda Z} e^{i\phi_0 X} e^{i\theta_0 Z} \ \sigw\ e^{i\phi_1 X} e^{i\theta_1 Z} \ \sigw \cdots \sigw \ e^{i\phi_n X} e^{i\theta_n Z} \ket{0} = (P, Q)
\end{align*}
We assume without loss of generality that the leading coefficient of $P$ is real and positive: otherwise, if this coefficient has phase $e^{2i\alpha}$, we can replace $\lambda \leftarrow \lambda - \alpha, \theta_n \leftarrow \theta_n - \alpha$ and the new protocol will generate $(P e^{-2i\alpha}, Q)$. Since $(a, b) := (z^{-n} P, Q) \in \ftH$, the $A_k$ can be rewritten in the form $(\ref{eq:processing-operator-euler-decomp})$ by Lemma~\ref{thm:nlft-qsp-analytic}, thus implying the condition on the phase factors:
\begin{align}
    \lambda + \sum_{k = 0}^n \theta_k = 0 \label{eq:canonical-condition}
\end{align}
We call GQSP protocols (or, equivalently, sets of phase factors) satisfying this condition \emph{canonical}, which constitute a uniqueness result for GQSP, thanks to the bijectivity of the NLFT:
\begin{corollary}
    \label{thm:gqsp-uniqueness}
    Let $P, Q \in \C[z]$ of degree $n$, where the leading coefficient of $P$ is real and positive. There exists a unique canonical set of phase factors $\lambda, \phi_k \in [-\frac{\pi}{2}, \frac{\pi}{2}), \theta_k \in [-\pi, \pi)$ such that:
    \begin{align*}
        e^{i\lambda Z} e^{i\phi_0 X} e^{i\theta_0 Z} \ \sigw\ e^{i\phi_1 X} e^{i\theta_1 Z} \ \sigw \cdots \sigw \ e^{i\phi_n X} e^{i \theta_n Z} \ket{0} = (P, Q)
    \end{align*}
    if and only if $|P(z)|^2 + |Q(z)|^2 \equiv 1$ on $\T$.
\end{corollary}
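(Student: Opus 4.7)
The plan is to reduce both existence and uniqueness to the bijectivity of the NLFT on $\ftH_{\ge 0}$, using Theorems~\ref{thm:nlft-qsp-analytic} and~\ref{thm:nlft-to-gqsp} as the bridge between the two worlds. The ``only if'' direction is immediate: any protocol of the stated form is a product of $SU(2)$ matrices acting on $\ket{0}$, so the resulting state $(P, Q)$ automatically satisfies $|P(z)|^2 + |Q(z)|^2 \equiv 1$ on $\T$.

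For existence, I would set $(a, b) := (z^{-n} P, Q)$ and verify $(a, b) \in \ftH_{\ge 0}$. The norm identity $aa^* + bb^* \equiv 1$ transfers directly from the hypothesis; $a$ is a Laurent polynomial supported in $\{-n, \ldots, 0\}$ with $a(\infty) = p_n > 0$, where $p_n$ is the real, positive leading coefficient of $P$; and $b = Q$ is a polynomial of degree at most $n$. The potentially delicate no-common-inner-factor condition reduces to ruling out a shared zero of $a^*$ and $b$ inside $\D$ (polynomials admit only Blaschke-type inner factors), which is immediate: such a shared zero $z_0$ would give $a(z_0) a^*(z_0) + b(z_0) b^*(z_0) = 0 \neq 1$, contradicting the algebraic extension of the norm identity off $\T$. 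Applying the NLFT bijection then produces a unique sequence $F \in \ell^2(\Z_{\ge 0})$ with support in $\{0, \ldots, n\}$, and Theorem~\ref{thm:nlft-to-gqsp} converts it into a canonical phase-factor set; canonicity $\lambda + \sum_k \theta_k = 0$ follows by telescoping $\theta_k = \psi_{k+1} - \psi_k$ with $\psi_{n+1} = 0$, which is enforced by the convention $\psi_k = 0$ when $F_k = 0$.

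For uniqueness, assume two canonical protocols realize the same $(P, Q)$. From each, define cumulative phases $\psi_k := \lambda + \sum_{j < k} \theta_j$ and the derived sequence $F_k := i \tan(\phi_k) e^{2 i \psi_k}$. Running the commutation argument of Theorem~\ref{thm:nlft-qsp-analytic} in reverse rewrites each protocol as $\calG_F(z) \cdot \sigw^n \cdot e^{i \psi_{n+1} Z}$ applied to $\ket{0}$; the canonicity condition forces $\psi_{n+1} = 0$, so the trailing $Z$-rotation is trivial and both resulting $\calG_F$ equal $(z^{-n} P, Q)$. NLFT bijectivity then forces the two sequences $F$ to coincide, and the map $F_k \mapsto (\phi_k, \psi_k)$ is inverted uniquely by the range $\phi_k \in [-\pi/2, \pi/2)$ together with the tie-breaking rule of Theorem~\ref{thm:nlft-to-gqsp} for $F_k \in \R$. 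The original phase factors are then recovered via $\lambda = \psi_0$ and $\theta_k = \psi_{k+1} - \psi_k$.

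The main obstacle is really bookkeeping: verifying the no-common-inner-factor condition (settled above via the norm identity) and managing the Euler-angle ambiguity $(\phi_k, \psi_k) \leftrightarrow (-\phi_k, \psi_k + \pi/2)$ consistently with the stated ranges. Everything else is a mechanical unwinding of the bijective correspondence between $\ell^2(\Z_{\ge 0})$ sequences and canonical GQSP protocols encoded in Theorem~\ref{thm:nlft-to-gqsp}.
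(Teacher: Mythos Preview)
Your approach is exactly the paper's: the corollary is presented there as an immediate consequence of NLFT bijectivity together with the dictionary set up in Section~\ref{sec:gqsp-nlft-equivalence}, and you flesh out that sketch with the right ingredients (including the no-common-inner-factor check, which the paper does not spell out).

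The gap is in the last step of uniqueness. After establishing $F^{(1)}=F^{(2)}$ you need the map $(\lambda,\phi_\bullet,\theta_\bullet)\mapsto F$ to be injective on canonical tuples in the stated ranges, and the Euler-angle ambiguity you flag is \emph{not} resolved by those ranges. They constrain $\lambda=\psi_0$ and the increments $\theta_k=\psi_{k+1}-\psi_k$, not the cumulative $\psi_k$ themselves, so the substitution $(\phi_0,\psi_0)\to(-\phi_0,\psi_0+\tfrac{\pi}{2})$ can be absorbed as $\lambda\to\lambda+\tfrac{\pi}{2}$, $\theta_0\to\theta_0-\tfrac{\pi}{2}$; this stays canonical and (for, say, $\lambda\in[-\tfrac{\pi}{2},0)$) stays in range, yet gives a distinct tuple with the same $F$ and hence the same $(P,Q)$. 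A separate degeneracy appears whenever an intermediate $\phi_k=0$: then $e^{i\phi_k X}=\id$ and only $\theta_{k-1}+\theta_k$ is determined, so the two can be traded off freely. Invoking the tie-breaking rule of Theorem~\ref{thm:nlft-to-gqsp} does not help, since that rule selects one particular $(\phi_k,\psi_k)$ from $F_k$, but nothing forces the protocol's own $\psi_k$ (which you defined from $\lambda,\theta_\bullet$) to coincide with that choice. The reduction to NLFT bijectivity is the right skeleton --- and it is all the paper offers --- but the final bookkeeping does not close as written; uniqueness genuinely holds for the sequence $F$ (equivalently for the processing operators $A_k$), and pinning down the tuple $(\lambda,\phi_k,\theta_k)$ would require a sharper normalization than the stated ranges provide.
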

\noindent In order to retrieve the inverse NLFT $F_k$ of $(z^{-n} P, Q)$, we solve the linear system obtained by inverting (\ref{eq:gqsp-phase-factor-computation}):
\begin{align*}
    \begin{cases}
        \psi_0 = \lambda \\
        \psi_{k+1} - \psi_k = \theta_k & \text{for } 0 \le k < n
    \end{cases}
\end{align*}
This $(n+1) \times (n+1)$ linear system has the unique solution $\psi_{k+1} = \lambda + \sum_{j = 0}^{k-1} \theta_j$, and because of Eq.~(\ref{eq:canonical-condition}) we also have $-\psi_n = \theta_n$. The new set of phase factors can then be used to construct a protocol
\begin{align*}
    A_0 \sigw A_1 \sigw \cdots \sigw A_n
\end{align*}
with $A_k = e^{i \psi_k Z} e^{i \phi_k X} e^{-i \psi_k Z}$, and $F_k = i \tan(\phi_k) e^{2i\psi_k}$, as already found in (\ref{eq:processing-operator-euler-decomp}).

\subsection{Switching the polynomials}
While the convention with the NLFT is to construct $(a, b)$, for a desired $b$, the current convention in QSP the target polynomial is actually the left polynomial $P$ in $(P, Q)$. Once we found the inverse NLFT, this section showed so far how to find phase factors for $(Q, P)$. Finding the phase factors for $(P, Q)$ is straightforward by multiplying on the right by $iX$.
\begin{align*}
    (Q, P) \cdot iX = (iP, iQ)
\end{align*}
In order to obtain the desired $P$, we transform $P \rightarrow -iP$ (or equivalently, we multiply the obtain inverse NLFT by $-i$). The $iX$ can be easily incorporated in the last GQSP processing operator using the properties of Pauli matrices:
\begin{align*}
    e^{i\phi_n X} e^{i\theta_n Z} \cdot iX = e^{i(\phi_n + \frac{\pi}{2}) X} e^{-i\theta_n Z} \ .
\end{align*}
Thus, it is sufficient to transform $\phi_n \leftarrow \phi_n + \frac{\pi}{2}, \theta_n \leftarrow -\theta_n$ to obtain $(P, iQ)$. We multiply by $iX$ instead of appearently more convenient $X$, because $iX$ preserves the subalgebra of $X$-constrained QSP. Indeed, if $P$ turns out to be real, then $-iP$ will be purely imaginary, and the complementary $Q$ will be real. Therefore, the phase factors for $(P, iQ)$ will be $X$-constrained. Similarly, one can swap the two polynomials in the $Y$-constrained subalgebra by multiplying by $iY$ on the right.
\section{Review of the Riemann-Hilbert-Weiss algorithm}
\label{sec:rhw-algorithm}
The Riemann-Hilbert-Weiss algorithm proposed in~\cite{alexisInfiniteQuantumSignal2024} takes some polynomial $b(z)$, and finds (1) a complementary polynomial $a(z)$ such that $(a, b) \in \ftH$, and (2) the sequence $F_k$ whose NLFT is $(a, b)$. This algorithm retrieves the $F_k$'s (up to precision $\epsilon$) of some $b$ with $\norm{b}_{\T} \le 1 - \eta$ supported on $\{ 0, \ldots, n \}$\footnote{Unlike~\cite{alexisInfiniteQuantumSignal2024}, we will always assume the support starts at zero, i.e., $b \in H^2(\D)$. This is without loss of generality by the shift property of Theorem~\ref{thm:nlft-properties}.} in $\bigO(n^4 + n/\eta \log \epsilon)$ time, but the Half Cholesky method of~\cite{niFastPhaseFactor2024} (revised in Section~\ref{sec:rhw-half-cholesky}) brings down the complexity to $\bigO(n^2 + n/\eta \log \epsilon)$.

While the authors in~\cite{alexisInfiniteQuantumSignal2024} prove the correctness of the algorithm in the case of $b$ having only imaginary coefficients, we show that little modification allows correct retrieval of $F_k$ for generally complex function $b$, without any penalty in complexity. The relationship with the phase factors shown in Section~\ref{sec:gqsp-nlft-equivalence} will then derive a corresponding GQSP protocol. It is worth remarking that the same algorithm is able to find phase factors for \emph{any} of the QSP variants we have shown --- from reflection to analytic --- and, whenever our polynomial $b$ falls into a subalgebra (e.g., it has real or imaginary coefficients), the returned phase factors will naturally fall into the corresponding subspace (e.g., they'll become $Y/X-$rotations, respectively). The reader can find a Python package~\cite{rhwPythonRepo} implementing the unified algorithm.

\begin{algorithm}
    \caption{Weiss' algorithm~\cite{alexisInfiniteQuantumSignal2024}}\label{alg:weiss}
    \begin{algorithmic}
        \State \textbf{Input:} A polynomial $b(z) = \sum_{k = 0}^n b_k z^k$ with $\lVert b \rVert_\infty \le 1 - \eta$, and $\epsilon > 0$.
        \State \textbf{Output:} The Fourier coefficients $(\hat{c}_0, \ldots, \hat{c}_n)$ of $b/a$ in $\{ 0, \ldots, n \}$, up to precision $\epsilon$.
        \State Let $N$ be the smallest power of two with $N \ge \frac{8n}{\eta} \log\qty(\frac{576 n^2}{\eta^4 \epsilon})$.
        \State $\hat{r} = (\hat{r}_{-N/2}, \ldots, \hat{r}_{N/2 - 1}) \leftarrow $ Fourier approximation of $R(z) = \log\sqrt{1 - \abs{b(z)}^2}$
        \State $\hat{G}^*(z) := \hat{r}_0 + 2 \sum_{k = 1}^{N/2} r_{-k} z^{-k}$ \Comment{Schwarz integral of $R(z)$}
        \State $\hat{c} = (\hat{c}_{-N/2}, \ldots, \hat{c}_{N/2 - 1}) \leftarrow$ Fourier approximation of $b(z) e^{-\hat{G}^*(z)}$ \\
        \Return $(\hat{c}_0, \ldots, \hat{c}_n)$
    \end{algorithmic}
\end{algorithm}

\subsection{Weiss' algorithm and the completion problem}
The first procedure --- which we recall in Algorithm~\ref{alg:weiss} --- is Weiss' algorithm.
The idea is quite simple, and we explain it using elements from~\cite{berntsonComplementaryPolynomialsQuantum2025}: the function $$f(z) = \sqrt{1 - \abs{b(z)}^2}$$ trivially satisfies $\abs{f(z)}^2 + \abs{b(z)}^2 = 1$, but it is not a polynomial. We take $R(z) := \log f(z)$ and our question reduces to finding a function $G(z)$ such that (1) is analytic in $\D$ (i.e., it does not have negative-degree terms in its Laurent series expansion) and $R(z) = \Re{G(z)}$\footnote{The imaginary part of $G$ is its \emph{Hilbert transform}, as shown in~\cite{alexisInfiniteQuantumSignal2024}.}. This implies that
\begin{align}
    a(z) := e^{G^*(z)}
    \label{eq:a-outer-function}
\end{align}
is analytic on $\D^*$, but still satisfies $|a|^2 + |b|^2 = 1$. Such $G(z)$ can be found using the Schwarz integral formula~\cite{berntsonComplementaryPolynomialsQuantum2025}:
\begin{align*}
    G(z) = \dashint_{\T} R(\zeta) \frac{\zeta + z}{\zeta - z} \frac{\dd \zeta}{\zeta} \ .
\end{align*}
Since $a$ is analytic on $\D^*$ and $a a^* = 1 - b b^*$ is a Laurent polynomial, then $a$ is a polynomial as well. The advantage of the Schwarz kernel is that it acts very simply on the Laurent series
\begin{align*}
    \dashint_{\T} \zeta^k \frac{\zeta + z}{\zeta - z} \frac{\dd \zeta}{\zeta} =
    \begin{cases}
        2 z^k & k > 0 \\
        1 & k = 0 \\
        0 & k < 0
    \end{cases}
\end{align*}
Therefore, applying the Schwarz transform on the Laurent series of $R(z)$ is straightforward. Weiss' algorithm uses the Fast Fourier Transform (on a sufficiently large group order) to compute a Fourier approximation of $R(z)$, it applies the Schwarz transform on its coefficients, and then computes the Fourier approximation of $e^{G^*(z)}$. While this algorithm can be used to compute $a$, we want to use it to compute an approximation of the ratio $b/a$, for the next part of the algorithm (which we can readily compute as the Fourier expansion of $b(z) e^{-G^*(z)}$). No substantial change is made to this algorithm, except for the fact that we do not assume the coefficients of $b(z)$ to be imaginary, nor that $b(z) = b(z^{-1})$. This implies that the $\hat{c}_k$ are not guaranteed to be imaginary, but this does not hinder the correctness of the algorithm, nor does it influence the error and complexity analysis given in~\cite{alexisInfiniteQuantumSignal2024}.

\subsection{Riemann-Hilbert algorithm}
In the second part of the algorithm, one computes $F_0$ by slicing the pair of polynomials
\begin{align*}
    (a, b) = (a_-, b_-) (a_+, b_+)
\end{align*}
where $b_-$ is supported only on negative frequencies, and $b_+$ is supported only on non-negative ones. This goes by the name of \emph{Riemann-Hilbert factorization}. Once one has $(a_+, b_+)$, then by the composition property $F_0$ can be computed as
\begin{align*}
    F_0 = \frac{b_+(0)}{a_+^*(0)}
\end{align*}
Analogously, one computes $F_k$ by carrying out a Riemann-Hilbert factorization on $(a, z^{-k} b)$, by leveraging the shift property. The only question is how to find $(a_+, b_+)$ from $(a, b)$, and it was proven in~\cite{alexisInfiniteQuantumSignal2024} that $a_+, b_+$ can be retrieved (up to a common multiplicative constant) as the unique solution of the linear system:
\begin{align*}
    \begin{bmatrix}
        \id & P_{\D^*} \frac{b^*}{a^*} \\
        - P_{\D} \frac{b}{a} & \id
    \end{bmatrix}
    \begin{bmatrix}
        A_+ \\ B_+
    \end{bmatrix}
    =
    \begin{bmatrix}
        1 \\ 0
    \end{bmatrix}
\end{align*}
where $P_{\D}, P_{\D^*}$ are the Cauchy projections on the Hardy spaces $H^2(\D), H^2(\D^*)$, respectively. One can see that this matrix can be written as $\id + M$, where $M$ is an anti-Hermitian operator, which guarantees the full operator to be invertible. This is true regardless of whether $b$ is imaginary, and thus a Riemann-Hilbert factorization always exists\footnote{This exists and can be obtained with this argument even if $b$ only satisfies the Szeg{\"o} condition~\cite{alexisInfiniteQuantumSignal2024}, i.e., $\int_\T \log(1 - |b|^2) > -\infty$.}. It is also proven that, provided $b$ is purely imaginary, such linear system is equivalent to the following finite linear system (following~\cite{niFastPhaseFactor2024}):
\begin{align*}
    \begin{bmatrix}
        \id & -T^T \\
        -T & \id
    \end{bmatrix}
    \begin{bmatrix}
        \vv{b} \\ \text{rev}(\vv{a})
    \end{bmatrix}
    =
    \begin{bmatrix}
        \vv{0} \\ \text{rev}(\vv{e}_0)
    \end{bmatrix}
\end{align*}
where $\vv{e}_0$ is the element of the standard basis, $\text{rev}(\vv{v})$ reverses the order of the elements of $\vv{v}$, and $T$ is the Toeplitz matrix having $(c_n, \ldots, c_0)^T$ as first column. This can be extended to the general case by slightly modifying the system:
\begin{align*}
    \begin{bmatrix}
        \id & -T^T \\
        T^* & \id
    \end{bmatrix}
    \begin{bmatrix}
        \vv{b} \\ \text{rev}(\vv{a})
    \end{bmatrix}
    =
    \begin{bmatrix}
        \vv{0} \\ \text{rev}(\vv{e}_0)
    \end{bmatrix}
\end{align*}
where $T^*$ is the matrix obtained by conjugating $T$ entry-wise. In the case of $b$ being imaginary, also $c$ is imaginary, and $T$ has imaginary entries as well, making $T^* = -T$, and recovering the base case. The Riemann-Hilbert-Weiss procedure is summarized in Algorithm~\ref{alg:rhw}.
\begin{algorithm}
    \caption{Generalized Riemann-Hilbert-Weiss algorithm~\cite{alexisInfiniteQuantumSignal2024}}\label{alg:rhw}
    \begin{algorithmic}
        \State \textbf{Input:} A polynomial $b(z) = \sum_{k = 0}^n b_k z^k$ with $\lVert b \rVert_\infty \le 1 - \eta$, and $\epsilon > 0$.
        \State \textbf{Output:} A sequence $F = (F_0, \ldots, F_n)$ such that $\calG_F(z) = (\cdot, b)$, up to precision $\epsilon$.
        \State $\hat{c} = (\hat{c}_{-N/2}, \ldots, \hat{c}_{N/2 - 1}) \leftarrow$ \textsc{Weiss}(b) \Comment{See Algorithm~\ref{alg:weiss}}
        \For{$k \in \{0, \ldots, n\}$}
            \State $T_k \leftarrow$ Toeplitz matrix with $(c_n, \ldots, c_k)^T$ as first column.
            \State Solve the system
            \begin{align*}
                \begin{bmatrix}
                    \id & -T_k^T \\
                    T^*_k & \id
                \end{bmatrix}
                \begin{bmatrix}
                    \vv{b}_k \\ \text{rev}(\vv{a}_k)
                \end{bmatrix}
                =
                \begin{bmatrix}
                    \vv{0} \\ \text{rev}(\vv{e}_0)
                \end{bmatrix}
            \end{align*}
            \State $F_k \leftarrow b_{k,0}/a_{k,0}$
        \EndFor \\
        \Return $F_0, \ldots, F_n$
    \end{algorithmic}
\end{algorithm}

\subsection{Half-Cholesky method}
\label{sec:rhw-half-cholesky}
Algorithm~\ref{alg:rhw} solves a linear system to compute each $F_k$, but it is possible to do better. In~\cite{niFastPhaseFactor2024}, the authors show that the system matrix for $F_k$ is a submatrix of the one for $F_{k-1}$, which allows to solve all the systems at once with a $LDL$ decomposition. We review their argument, showing that much like the imaginary case, also the general case can be solved similarly by considering general complex matrices.
\begin{align*}
    \begin{bmatrix}
        \id & -T^T \\
        T^* & \id
    \end{bmatrix}
    \begin{bNiceArray}{c|cccc}[margin=6pt]
        \Block{4-1}<\Large>{\id}
        &b_{n,0} & b_{n-1,0}&\cdots & b_{0,0}\\
        & & b_{n-1,1}&&b_{0,1}\\
        &&&\ddots&\vdots\\
        &&&& b_{0,n}\\
        \hline
        &a_{n,0} & a_{n-1,1}&\cdots & a_{0,n}\\
        & & a_{n-1,0}&&a_{0,n-1}\\
        &&&\ddots&\vdots\\
        &&&& a_{0,0}
    \end{bNiceArray}
    =
    \begin{bNiceArray}{c|cccc}[margin=6pt]
        \Block{4-1}<\Large>{\id} 
        &&&&\\
        &&&&\\
        &&&&\\
        &&&&\\
        \hline
        \Block{4-1}<\Large>{T^*}
        &1 & & & \\
        &* & 1&&\\
        &\vdots&&\ddots&\\
        &*&*&\cdots& 1
    \end{bNiceArray}
\end{align*}
The matrix in the middle is invertible, since the constant coefficients of $a$ are always real and positive. Thus if we call $A$ the matrix on the left, a $LU$ decomposition $A = L_A U_A$ would give all the $b_{k,0}, a_{k,0}$ in $U_A^{-1}$. This method would already bring down the complexity of the Riemann-Hilbert algorithm from $\Tilde{\bigO}(n^4)$ to $\Tilde{\bigO}(n^3)$, but we can do better. By calling $B = T^*$, we aim for a $LDL$ decomposition:
\begin{align*}
    A =
    \begin{bmatrix}
        \id & -B^\dag \\
        B & \id
    \end{bmatrix}
    & =
    \begin{bmatrix}
        \id & \\
        B & \id
    \end{bmatrix}
    \begin{bmatrix}
        \id & \\
        & \id + B B^\dag
    \end{bmatrix}
    \begin{bmatrix}
        \id & -B^\dag \\
        & \id
    \end{bmatrix} \\
    & =
    \begin{bmatrix}
        \id & \\
        B & L
    \end{bmatrix}
    \begin{bmatrix}
        \id & \\
        & D L^\dag
    \end{bmatrix}
    \begin{bmatrix}
        \id & -B^\dag \\
        & \id
    \end{bmatrix} \ .
\end{align*}
The second equality is taken by decomposing $\id + BB^\dag = LDL^\dag$ (this is possible as the left-hand side is clearly positive definite). The product of the two matrices on the right is exactly $U_A$, which we can invert using Schur's complement formula.
\begin{align*}
    U_A^{-1} =
    \begin{bmatrix}
        \id & B^\dag \\
        & \id
    \end{bmatrix}
    \begin{bmatrix}
        \id & \\
        & L^{-\dag} D^{-1}
    \end{bmatrix}
    =
    \begin{bmatrix}
        \id & B^\dag (DL^\dag)^{-1} \\
        0 & (DL^\dag)^{-1}
    \end{bmatrix}
\end{align*}
Since we are only interested in the $b_{k,0}, a_{k,0}$, we want to extract only the first row of $B^{\dag} (DL^\dag)^{-1}$ and the diagonal of $(DL^\dag)^{-1}$ (which is simply the diagonal of $D^{-1}$, since $L$ has ones on its diagonal).
\begin{align*}
    (1, 0, \ldots, 0) B^\dag (D L^\dag)^{-1} & = (b_{n,0}, \ldots, b_{0,0}) \\
    D & = \diag(1/a_{n,0}, \ldots, 1/a_{0,0})
\end{align*}
Thus our final solution can be computed as:
\begin{align*}
    \begin{bmatrix}
        F^*_n \\ \vdots \\ F^*_0
    \end{bmatrix}
    =
    \begin{bmatrix}
        b^*_{n,0}/a_{n,0} \\ \vdots \\ b^*_{0,0}/a_{0,0}
    \end{bmatrix}
    = D D^{-1} L^{-1} B
    \begin{bmatrix}
        1 \\ \vdots \\ 0
    \end{bmatrix}
    = L^{-1}
    \begin{bmatrix}
        c_n^* \\ \vdots \\ c_0^*
    \end{bmatrix}
    =: L^{-1} \vv{p}
\end{align*}
The only difference with~\cite{niFastPhaseFactor2024} is that the decompositions are done over the complex field, and thus one must pay attention to the conjugation of the entries in $B, L$. By computing $L$ in the $LDL$ decomposition of $K := \id + BB^\dag$ yields the sequence. This would still take $\Tilde{\bigO}(n^3)$, but the authors in~\cite{niFastPhaseFactor2024} take advantage of the \emph{displacement structure}~\cite{sayedFastAlgorithmsGeneralized1995}:
\begin{align*}
    K - Z K Z^\dag = G G^\dag
\end{align*}
where $Z$ is the lower shift matrix and $G = [\vv{e}_0, \vv{p}] \in \C^{(n+1) \times 2}$. This additional information shows that $K$ is fully determined by $G$, and the $\bigO(n^2)$ algorithm retrieving $L$ shown in~\cite{niFastPhaseFactor2024} works also in the general case, up to a switch from real to complex field.

\subsection{Numerical demonstration}
We show here the numerical performance of the generalized Riemann-Hilbert-Weiss algorithm and compare it with two previous methods: the convolution optimization method~\cite{motlaghGeneralizedQuantumSignal2024} and Prony's method~\cite{yamamotoRobustAngleFinding2024}, both conjugated with the plain layer stripping algorithm\footnote{The authors in~\cite{yamamotoRobustAngleFinding2024} call this \emph{carving}, while \emph{layer stripping} is a term that comes from the NLFT literature.}. All the methods are given the same random polynomial with $\eta = 0.5$. Both previous methods give a comparable runtime with the Half-Cholesky solver, but Prony's method failed to give an accurate solution after $n \simeq 100$, while the convolution optimization algorithm consistently gave an error between $10^{-3}$ and $10^{-4}$, which is likely not enough to survive the stripping procedure\footnote{Recently, the authors in~\cite{niInverseNonlinearFast2025} proved that the layer stripping algorithm is stable when the complementary polynomial is outer, but showed non-outer examples when stability is lost.}. Figure~\ref{fig:numerical-comparison} summarizes the results.

\begin{figure}
    \begin{center}
        \begin{subfigure}[t]{0.49\textwidth}
            \centering
            \begin{tikzpicture}
    \begin{loglogaxis}[
        xlabel={Degree},
        title={Time (s)},
        legend style={at={(axis cs:12700,1)},anchor=south west},
        grid=both,
        width=8cm,
        height=5.5cm,
        legend entries={},
    ]

    \addplot+[mark=*, color=blue] coordinates {
        (5, 0.604099)
        (10, 0.107597)
        (20, 0.055522)
        (50, 0.092187)
        (100, 0.240685)
        (200, 0.339805)
        (500, 0.805176)
        (1000, 2.178746)
        (2000, 8.944993)
        (5000, 42.772226)
        (10000, 163.382205)
    };

    \addplot+[mark=square*, color=red] coordinates {
        (5, 0.000774)
        (10, 0.000778)
        (20, 0.001507)
        (50, 0.006277)
        (100, 0.02227)
        (200, 0.08142)
        (500, 0.533492)
        (1000, 2.38536)
        (2000, 16.255877)
        (5000, 194.5627)
        (10000, 1527.635769)
    };

    \addplot+[mark=triangle*, color=darkgreen] coordinates {
        (5, 0.001228)
        (10, 0.002097)
        (20, 0.006187)
        (50, 0.055005)
        (100, 0.39201)
        (200, 3.078557)
        (500, 51.578559)
    };

    \addplot+[mark=diamond*, color=darkyellow] coordinates {
        (5, 0.001169)
        (10, 0.00161)
        (20, 0.002628)
        (50, 0.005533)
        (100, 0.011838)
        (200, 0.028156)
        (500, 0.107665)
        (1000, 0.360379)
        (2000, 1.542592)
        (5000, 10.129023)
        (10000, 54.537072)
    };

    \end{loglogaxis}
\end{tikzpicture}
        \end{subfigure}
        \hfill
        \begin{subfigure}[t]{0.49\textwidth}
            \centering
            \vspace*{-11em}
            \begin{tikzpicture}
    \begin{axis}[
        hide axis,
        xmin=0, xmax=1,
        ymin=0, ymax=1,
        legend columns=1,
        legend style={
            draw=none,
            font=\small,
            cells={anchor=west},
            anchor=north west,
            column sep=1em,
        }
    ]
    \addlegendimage{mark=*, color=blue}
    \addlegendentry{Minimize convolution + layer strip~\cite{motlaghGeneralizedQuantumSignal2024}}
    \addlegendimage{mark=square*, color=red}
    \addlegendentry{Prony's method + layer strip~\cite{yamamotoRobustAngleFinding2024}}
    \addlegendimage{mark=triangle*, color=darkgreen}
    \addlegendentry{Generalized RHW}
    \addlegendimage{mark=diamond*, color=darkyellow}
    \addlegendentry{Generalized RHW + Half-Cholesky}
    \end{axis}
\end{tikzpicture}
        \end{subfigure}
        \\
        \begin{subfigure}[t]{0.49\textwidth}
            \centering
            \begin{tikzpicture}
    \begin{loglogaxis}[
        xlabel={Degree},
        title={Completion Forward Error},
        grid=both,
        width=8cm,
        height=6cm,
    ]
    
    \addplot+[mark=*, color=blue] coordinates {
        (5, 3.9058948828090076e-05)
        (10, 0.0009732384983898672)
        (20, 0.0019137534740062885)
        (50, 0.0017380506669568293)
        (100, 0.0007965906472860171)
        (200, 0.0006214630944606345)
        (500, 0.0004478001978560943)
        (1000, 0.0004285694258335868)
        (2000, 0.0004162558693928458)
        (5000, 0.00040100544142523904)
        (10000, 0.00041245494242610296)
    };

    \addplot+[mark=square*, color=red] coordinates {
        (5, 9.22732118031501e-15)
        (10, 8.789069718231324e-14)
        (20, 1.6764111552121918e-12)
        (50, 0.4513945169813889)
        (100, 0.002289005587915917)
        (200, 14.766884871346768)
        (500, 16.45512860475663)
        (1000, 14.938720134824697)
        (2000, 12.5899898951599)
        (5000, 20.003652768983127)
        (10000, 73.73285804186976)
    };

    \addplot+[mark=triangle*, color=darkgreen] coordinates {
        (5, 2.4550811316481624e-16)
        (10, 1.2562398856048086e-16)
        (20, 1.8311014654306592e-16)
        (50, 1.660970955577654e-16)
        (100, 2.961619280638464e-16)
        (200, 2.0512540005190402e-16)
        (500, 3.398884215310824e-16)
    };

    \addplot+[mark=diamond*, color=darkyellow] coordinates {
        (5, 2.4550811316481624e-16)
        (10, 1.2562398856048086e-16)
        (20, 1.8311014654306592e-16)
        (50, 1.660970955577654e-16)
        (100, 2.961619280638464e-16)
        (200, 2.0512540005190402e-16)
        (500, 3.398884215310824e-16)
        (1000, 2.711105819969782e-16)
        (2000, 3.7070384894585946e-16)
        (5000, 3.368196022710338e-16)
        (10000, 3.490570391413021e-16)
    };
    \end{loglogaxis}
\end{tikzpicture}
        \end{subfigure}
        \hfill
        \begin{subfigure}[t]{0.49\textwidth}
            \centering
            \begin{tikzpicture}
    \begin{loglogaxis}[
        xlabel={Degree},
        title={Inverse NLFT forward error},
        grid=both,
        width=8cm,
        height=6cm,
    ]
    
    \addplot+[mark=*, color=blue] coordinates {
        (5, 0.0007013048405174351)
        (10, 0.0708783003221537)
        (20, 0.05814629533893324)
        (50, 0.9135690821373526)
        (100, 0.9277136528011141)
        (200, 1.0222987265675056)
        (500, 1.3767807175468454)
        (1000, 1.3217954182378917)
        (2000, 1.4645152777161343)
        (5000, 1.5645517093864854)
        (10000, 1.6374284746996228)
    };
    
    \addplot+[mark=square*, color=red] coordinates {
        (5, 7.241850991377079e-15)
        (10, 6.924825881018361e-14)
        (20, 1.272828170167208e-12)
        (50, 0.29070614971881037)
        (100, 0.001690801462474672)
        (200, 2.211139659189373)
        (500, 2.063506826168354)
        (1000, 2.178554107217126)
        (2000, 2.239040097129427)
        (5000, 2.882993125633712)
        (10000, 5.3591877669822185)
    };
    
    \addplot+[mark=triangle*, color=darkgreen] coordinates {
        (5, 3.2773521868536315e-16)
        (10, 1.8023423793820858e-16)
        (20, 6.968393310813226e-16)
        (50, 7.348166773229513e-16)
        (100, 9.287449232689912e-16)
        (200, 3.356754920353195e-15)
        (500, 1.2281266214061853e-14)
    };
    
    \addplot+[mark=diamond*, color=darkyellow] coordinates {
        (5, 3.145140370551792e-16)
        (10, 1.7529185531358286e-16)
        (20, 6.914050156849249e-16)
        (50, 7.224152096092452e-16)
        (100, 9.312294607735957e-16)
        (200, 3.544703160646856e-15)
        (500, 1.2280313568249785e-14)
        (1000, 2.1847148285461807e-14)
        (2000, 4.568428251118783e-14)
        (5000, 8.899294531366063e-14)
        (10000, 4.2960853481452844e-13)
    };

    \end{loglogaxis}
\end{tikzpicture}
        \end{subfigure}
    \end{center}
    \caption{Comparison of the generalized Riemann-Hilbert-Weiss algorithm (with and without Half-Cholesky) with previous methods, namely the convolution optimization + layer stripping of~\cite{motlaghGeneralizedQuantumSignal2024} and Prony's method + layer stripping~\cite{yamamotoRobustAngleFinding2024}. The completion error is computed as $\norm{|a(z)|^2 + |b(z)|^2 - 1}_{L^2}$, while the inverse NLFT error is the distance (again in $L^2$ norm) between the original pair $(a, b)$ and the reconstructed pair obtained from the NLFT of the computed sequence $F_k$.}
    \label{fig:numerical-comparison}
\end{figure}
\section{Discussion}
In this work we consider Non-Linear Fourier Analysis and the $SU(2)$-Fourier transform for different variants of quantum signal processing, extending the results of~\cite{alexisQuantumSignalProcessing2024,alexisInfiniteQuantumSignal2024}. In particular, we show that different variants of QSP (that correspond to subalgebras of the $SU(2)$ matrix polynomials~\cite{chaoFindingAnglesQuantum2020}), correspond to constraints on the NLFT sequence. In particular, a generally complex sequence can always be translated to a GQSP protocol.

We highlight that, in light of this work, a proof of Ya-Ju Tsai from 2005~\cite{tsaiSU2NonlinearFourier2005} characterizing the image of compactly supported sequences through the $SU(2)$-NLFT can be seen as an alternative proof for GQSP. The same statement gave a uniqueness result for the phase factors in GQSP, as shown in Corollary~\ref{thm:gqsp-uniqueness}. Moreover, $\ftH$ gives the image of a possible extension of GQSP to one-sided infinite sequences of phase factors. By the relation $F_k = i \tan(\phi_k) e^{2i\psi_k}$ found in~(\ref{eq:processing-operator-euler-decomp}), we can conclude $\phi_k \rightarrow 0$ (like in~\cite{dongInfiniteQuantumSignal2024}), but the prefactors $\psi_k$ need not converge, as they only control the phase of $F_k$. For the same reason, however, they do not influence the convergence of the processing operators, since
$$\norm{A_k - \id} = \norm{e^{i\psi_k Z} e^{i\phi_k X} e^{-i\psi_k Z} - \id} = 2 \sin(\frac{\abs{\phi_k}}{2})$$
which converges to $0$ regardless of the behaviour of $\psi_k$. In other words, the effects of $\theta_k$ in an infinite GQSP protocol become negligible as $k \rightarrow \infty$, as expected.

We also showed how the Riemann-Hilbert-Weiss algorithm in~\cite{alexisInfiniteQuantumSignal2024,niFastPhaseFactor2024} can be easily extended to generally complex sequences, giving a provably stable algorithm for the phase factors computation of GQSP, with the notable property that the produced protocol will be in a minimal subalgebra, i.e., whenever the desired polynomial pair $(P, Q)$ can be constructed with only $X/Y$ rotations, then the double GQSP rotations will automatically reduce to a single $X/Y$ rotation.

NLFA is a promising tool: more recent work managed to bring down the $\bigO(n^2)$ complexity of the Half-Cholesky method to $\bigO(n \log^2 n)$~\cite{niInverseNonlinearFast2025}. The complexification of QSP in~\cite{bastidasComplexificationQuantumSignal2024} shows how the theories of $SU(2)$-NLFT and $SU(1,1)$-NLFT are related to a more general non-unitary QSP. Moreover, NLFA might be a crucial tool in understanding the properties of M-QSP: in particular, the rigidity of the layer stripping argument --- which is not always possible in the multivariate case~\cite{laneveMultivariatePolynomialsAchievable2025,rossiMultivariableQuantumSignal2022,moriCommentMultivariableQuantum2024,itoPolynomialTimeConstructive2024} --- makes it hard to deduce characterizations of the M-QSP polynomials, and a treatment based on Riemann-Hilbert factorization is subject to future work.

\section*{Acknowledgments}
I would like to thank Michel Alexis for his invaluable help with Non-Linear Fourier Analysis, and Danial Motlagh for clarifications on GQSP. This work is supported by the Swiss National Science Foundation (SNSF), project No. 200020-214808.

\begin{filecontents}{extras.bib}
    @repository{rhwPythonRepo,
        code = {https://github.com/LorenzoLaneve/nlft-qsp}
    }
\end{filecontents}

\bibliographystyle{quantum}
\bibliography{refs,extras}

\begin{thebibliography}{10}

\bibitem{lowHamiltonianSimulationUniform2017}
Guang~Hao Low and Isaac~L Chuang.
\newblock ``Hamiltonian {{Simulation}} by {{Uniform Spectral Amplification}}''~(2017).
\newblock  \href{http://arxiv.org/abs/1707.05391}{arXiv:1707.05391}.

\bibitem{lowHamiltonianSimulationQubitization2019}
Guang~Hao Low and Isaac~L Chuang.
\newblock ``Hamiltonian {{Simulation}} by {{Qubitization}}''.
\newblock \href{https://dx.doi.org/10.22331/q-2019-07-12-163}{Quantum {\bf 3}, 163}~(2019).

\bibitem{lowOptimalHamiltonianSimulation2017}
Guang~Hao Low and Isaac~L. Chuang.
\newblock ``Optimal {{Hamiltonian Simulation}} by {{Quantum Signal Processing}}''.
\newblock \href{https://dx.doi.org/10.1103/PhysRevLett.118.010501}{Physical Review Letters {\bf 118}, 010501}~(2017).

\bibitem{dongGroundStatePreparationEnergy2022}
Yulong Dong, Lin Lin, and Yu~Tong.
\newblock ``Ground-{{State Preparation}} and {{Energy Estimation}} on {{Early Fault-Tolerant Quantum Computers}} via {{Quantum Eigenvalue Transformation}} of {{Unitary Matrices}}''.
\newblock \href{https://dx.doi.org/10.1103/PRXQuantum.3.040305}{PRX Quantum {\bf 3}, 40305}~(2022).

\bibitem{gilyenQuantumSingularValue2019}
Andr{\'a}s Gily{\'e}n, Yuan Su, Guang~Hao Low, and Nathan Wiebe.
\newblock ``Quantum singular value transformation and beyond: Exponential improvements for quantum matrix arithmetics''.
\newblock In Proceedings of the 51st {{Annual ACM SIGACT Symposium}} on {{Theory}} of {{Computing}}.
\newblock \href{https://dx.doi.org/10.1145/3313276.3316366}{Pages 193--204}.
\newblock ACM~(2019).

\bibitem{tangCSGuideQuantum2023}
Ewin Tang and Kevin Tian.
\newblock ``A {{CS}} guide to the quantum singular value transformation''~(2023).
\newblock  \href{http://arxiv.org/abs/2302.14324}{arXiv:2302.14324}.

\bibitem{martynEfficientFullycoherentQuantum2023}
John~M. Martyn, Yuan Liu, Zachary~E. Chin, and Isaac~L. Chuang.
\newblock ``Efficient fully-coherent quantum signal processing algorithms for real-time dynamics simulation''.
\newblock \href{https://dx.doi.org/10.1063/5.0124385}{The Journal of Chemical Physics {\bf 158}, 024106}~(2023).

\bibitem{groverFastQuantumMechanical1996}
Lov~K. Grover.
\newblock ``A {{Fast Quantum Mechanical Algorithm}} for {{Database Search}}''.
\newblock In Proceedings of the {{Twenty-Eighth Annual ACM Symposium}} on {{Theory}} of {{Computing}}.
\newblock \href{https://dx.doi.org/10.1145/237814.237866}{Pages 212--219}.
\newblock Association for Computing Machinery~(1996).

\bibitem{brassardQuantumAmplitudeAmplification2002}
Gilles Brassard, Peter Hoyer, Michele Mosca, and Alain Tapp.
\newblock ``Quantum {{Amplitude Amplification}} and {{Estimation}}''.
\newblock \href{https://dx.doi.org/10.1090/conm/305/05215}{Quantum Computation and Information {\bf 305}, 53--74}~(2002).

\bibitem{harrowQuantumAlgorithmLinear2009}
Aram~W. Harrow, Avinatan Hassidim, and Seth Lloyd.
\newblock ``Quantum {{Algorithm}} for {{Linear Systems}} of {{Equations}}''.
\newblock \href{https://dx.doi.org/10.1103/PhysRevLett.103.150502}{Physical Review Letters {\bf 103}, 150502}~(2009).

\bibitem{martynGrandUnificationQuantum2021}
John~M. Martyn, Zane~M. Rossi, Andrew~K. Tan, and Isaac~L. Chuang.
\newblock ``A {{Grand Unification}} of {{Quantum Algorithms}}''.
\newblock \href{https://dx.doi.org/10.1103/PRXQuantum.2.040203}{PRX Quantum {\bf 2}, 40203}~(2021).

\bibitem{mcardleQuantumStatePreparation2022}
Sam McArdle, Andr{\'a}s Gily{\'e}n, and Mario Berta.
\newblock ``Quantum state preparation without coherent arithmetic''~(2022).
\newblock  \href{http://arxiv.org/abs/2210.14892}{arXiv:2210.14892}.

\bibitem{laneveRobustBlackboxQuantumstate2023}
Lorenzo Laneve.
\newblock ``Robust black-box quantum-state preparation via quantum signal processing''~(2023).
\newblock  \href{http://arxiv.org/abs/2305.04705}{arXiv:2305.04705}.

\bibitem{rallAmplitudeEstimationQuantum2023}
Patrick Rall and Bryce Fuller.
\newblock ``Amplitude {{Estimation}} from {{Quantum Signal Processing}}''.
\newblock \href{https://dx.doi.org/10.22331/q-2023-03-02-937}{Quantum {\bf 7}, 937}~(2023).

\bibitem{sinanan-singhSingleshotQuantumSignal2023}
Jasmine {Sinanan-Singh}, Gabriel~L. Mintzer, Isaac~L. Chuang, and Yuan Liu.
\newblock ``Single-shot {{Quantum Signal Processing Interferometry}}''~(2023).
\newblock  \href{http://arxiv.org/abs/2311.13703}{arXiv:2311.13703}.

\bibitem{rossiQuantumSignalProcessing2023}
Zane~M Rossi, Victor~M Bastidas, William~J Munro, and Isaac~L Chuang.
\newblock ``Quantum signal processing with continuous variables''~(2023).
\newblock  \href{http://arxiv.org/abs/2304.14383}{arXiv:2304.14383}.

\bibitem{motlaghGeneralizedQuantumSignal2024}
Danial Motlagh and Nathan Wiebe.
\newblock ``Generalized {{Quantum Signal Processing}}''.
\newblock \href{https://dx.doi.org/10.1103/PRXQuantum.5.020368}{PRX Quantum {\bf 5}, 020368}~(2024).

\bibitem{berryDoublingEfficiencyHamiltonian2024a}
Dominic~W. Berry, Danial Motlagh, Giacomo Pantaleoni, and Nathan Wiebe.
\newblock ``Doubling the efficiency of {{Hamiltonian}} simulation via generalized quantum signal processing''.
\newblock \href{https://dx.doi.org/10.1103/PhysRevA.110.012612}{Physical Review A {\bf 110}, 012612}~(2024).

\bibitem{sunderhaufGeneralizedQuantumSingular2023}
Christoph S{\"u}nderhauf.
\newblock ``Generalized {{Quantum Singular Value Transformation}}''~(2023).
\newblock  \href{http://arxiv.org/abs/2312.00723}{arXiv:2312.00723}.

\bibitem{haahProductDecompositionPeriodic2019}
Jeongwan Haah.
\newblock ``Product {{Decomposition}} of {{Periodic Functions}} in {{Quantum Signal Processing}}''.
\newblock \href{https://dx.doi.org/10.22331/q-2019-10-07-190}{Quantum {\bf 3}, 190}~(2019).

\bibitem{chaoFindingAnglesQuantum2020}
Rui Chao, Dawei Ding, Andras Gilyen, Cupjin Huang, and Mario Szegedy.
\newblock ``Finding {{Angles}} for {{Quantum Signal Processing}} with {{Machine Precision}}''~(2020).
\newblock  \href{http://arxiv.org/abs/2003.02831}{arXiv:2003.02831}.

\bibitem{yingStableFactorizationPhase2022}
Lexing Ying.
\newblock ``Stable factorization for phase factors of quantum signal processing''.
\newblock \href{https://dx.doi.org/10.22331/q-2022-10-20-842}{Quantum {\bf 6}, 842}~(2022).

\bibitem{dongEfficientPhasefactorEvaluation2021}
Yulong Dong, Xiang Meng, K~Birgitta Whaley, and Lin Lin.
\newblock ``Efficient phase-factor evaluation in quantum signal processing''.
\newblock \href{https://dx.doi.org/10.1103/PhysRevA.103.042419}{Physical Review A {\bf 103}, 42419}~(2021).

\bibitem{wangEnergyLandscapeSymmetric2022}
Jiasu Wang, Yulong Dong, and Lin Lin.
\newblock ``On the energy landscape of symmetric quantum signal processing''.
\newblock \href{https://dx.doi.org/10.22331/q-2022-11-03-850}{Quantum {\bf 6}, 850}~(2022).

\bibitem{dongRobustIterativeMethod2024}
Yulong Dong, Lin Lin, Hongkang Ni, and Jiasu Wang.
\newblock ``Robust {{Iterative Method}} for {{Symmetric Quantum Signal Processing}} in {{All Parameter Regimes}}''.
\newblock \href{https://dx.doi.org/10.1137/23M1598192}{SIAM Journal on Scientific Computing {\bf 46}, A2951--A2971}~(2024).

\bibitem{dongInfiniteQuantumSignal2024}
Yulong Dong, Lin Lin, Hongkang Ni, and Jiasu Wang.
\newblock ``Infinite quantum signal processing''.
\newblock \href{https://dx.doi.org/10.22331/q-2024-12-10-1558}{Quantum {\bf 8}, 1558}~(2024).

\bibitem{yamamotoRobustAngleFinding2024}
Shuntaro Yamamoto and Nobuyuki Yoshioka.
\newblock ``Robust {{Angle Finding}} for {{Generalized Quantum Signal Processing}}''~(2024).
\newblock  \href{http://arxiv.org/abs/2402.03016}{arXiv:2402.03016}.

\bibitem{alexisQuantumSignalProcessing2024}
Michel Alexis, Gevorg Mnatsakanyan, and Christoph Thiele.
\newblock ``Quantum signal processing and nonlinear {{Fourier}} analysis''.
\newblock \href{https://dx.doi.org/10.1007/s13163-024-00494-5}{Revista Matem{\'a}tica Complutense {\bf 37}, 655--694}~(2024).

\bibitem{taoNonlinearFourierAnalysis2012}
Terence Tao and Christoph Thiele.
\newblock ``Nonlinear {{Fourier Analysis}}''~(2012).
\newblock  \href{http://arxiv.org/abs/1201.5129}{arXiv:1201.5129}.

\bibitem{alexisInfiniteQuantumSignal2024}
Michel Alexis, Lin Lin, Gevorg Mnatsakanyan, Christoph Thiele, and Jiasu Wang.
\newblock ``Infinite quantum signal processing for arbitrary {{Szeg{\"o}}} functions''~(2024).
\newblock  \href{http://arxiv.org/abs/2407.05634}{arXiv:2407.05634}.

\bibitem{niFastPhaseFactor2024}
Hongkang Ni and Lexing Ying.
\newblock ``Fast {{Phase Factor Finding}} for {{Quantum Signal Processing}}''~(2024).
\newblock  \href{http://arxiv.org/abs/2410.06409}{arXiv:2410.06409}.

\bibitem{rhwPythonRepo}
 code:~\href{https://github.com/LorenzoLaneve/nlft-qsp}{LorenzoLaneve/nlft-qsp}.

\bibitem{laneveMultivariatePolynomialsAchievable2025}
Lorenzo Laneve and Stefan Wolf.
\newblock ``On multivariate polynomials achievable with quantum signal processing''.
\newblock \href{https://dx.doi.org/10.22331/q-2025-02-20-1641}{Quantum {\bf 9}, 1641}~(2025).

\bibitem{lowMethodologyResonantEquiangular2016}
Guang~Hao Low, Theodore~J. Yoder, and Isaac~L. Chuang.
\newblock ``Methodology of {{Resonant Equiangular Composite Quantum Gates}}''.
\newblock \href{https://dx.doi.org/10.1103/PhysRevX.6.041067}{Physical Review X {\bf 6}, 41067}~(2016).

\bibitem{tsaiSU2NonlinearFourier2005}
Ya-Ju Tsai.
\newblock ``{{SU}}(2) non-linear {{Fourier}} transform''.
\newblock PhD thesis.
\newblock University of California Los Angeles.
\newblock ~(2005).

\bibitem{nielsenQuantumComputationQuantum2010a}
Michael~A. Nielsen and Isaac~L. Chuang.
\newblock ``Quantum computation and quantum information: 10th anniversary edition''.
\newblock Cambridge University Press. Cambridge~(2010).
\newblock  url:~\url{https://doi.org/10.1017/CBO9780511976667}.

\bibitem{berntsonComplementaryPolynomialsQuantum2025}
Bjorn~K. Berntson and Christoph S{\"u}nderhauf.
\newblock ``Complementary {{Polynomials}} in {{Quantum Signal Processing}}''.
\newblock \href{https://dx.doi.org/10.1007/s00220-025-05302-9}{Communications in Mathematical Physics {\bf 406}, 161}~(2025).

\bibitem{sayedFastAlgorithmsGeneralized1995}
Ali~H. Sayed and Thomas Kailath.
\newblock ``Fast algorithms for generalized displacement structures and lossless systems''.
\newblock \href{https://dx.doi.org/10.1016/0024-3795(93)00193-4}{Linear Algebra and its Applications {\bf 219}, 49--78}~(1995).

\bibitem{niInverseNonlinearFast2025}
Hongkang Ni, Rahul Sarkar, Lexing Ying, and Lin Lin.
\newblock ``Inverse nonlinear fast {{Fourier}} transform on {{SU}}(2) with applications to quantum signal processing''~(2025).
\newblock  \href{http://arxiv.org/abs/2505.12615}{arXiv:2505.12615}.

\bibitem{bastidasComplexificationQuantumSignal2024}
V.~M. Bastidas and K.~J. Joven.
\newblock ``Complexification of {{Quantum Signal Processing}} and its {{Ramifications}}''~(2024).
\newblock  \href{http://arxiv.org/abs/2407.04780}{arXiv:2407.04780}.

\bibitem{rossiMultivariableQuantumSignal2022}
Zane~M. Rossi and Isaac~L. Chuang.
\newblock ``Multivariable quantum signal processing ({{M-QSP}}): Prophecies of the two-headed oracle''.
\newblock \href{https://dx.doi.org/10.22331/q-2022-09-20-811}{Quantum {\bf 6}, 811}~(2022).

\bibitem{moriCommentMultivariableQuantum2024}
Hitomi Mori, Kaoru Mizuta, and Keisuke Fujii.
\newblock ``Comment on "{{Multivariable}} quantum signal processing ({{M-QSP}}): Prophecies of the two-headed oracle"''.
\newblock \href{https://dx.doi.org/10.22331/q-2024-10-29-1512}{Quantum {\bf 8}, 1512}~(2024).

\bibitem{itoPolynomialTimeConstructive2024}
Yuki Ito, Hitomi Mori, Kazuki Sakamoto, and Keisuke Fujii.
\newblock ``Polynomial time constructive decision algorithm for multivariable quantum signal processing''~(2024).
\newblock  \href{http://arxiv.org/abs/2410.02332}{arXiv:2410.02332}.

\end{thebibliography}

\end{document}